\renewcommand\bf\bfseries
\addspace\printfield{pages}\addspace
\numberwithin{equation}{section}
\newcommand\myshade{85}
\colorlet{mylinkcolor}{violet}
\colorlet{mycitecolor}{YellowOrange}
\colorlet{myurlcolor}{Aquamarine}
\definecolor{ct_black}{HTML}{000000}
\definecolor{ct_orange}{HTML}{ED872D}
\definecolor{ct_purple}{HTML}{7A68A6}
\definecolor{ct_blue}{HTML}{348ABD}
\definecolor{ct_turquoise}{HTML}{188487}
\definecolor{ct_red}{HTML}{E32636}
\definecolor{ct_pink}{HTML}{CF4457}
\definecolor{ct_green}{HTML}{467821}
\definecolor{ct2_green}{HTML}{9FF781}
\definecolor{ct2_green_dark}{HTML}{088A08}
\theoremstyle{plain}
\newtheorem{thm}{\protect\theoremname}[section]
\newtheorem*{thm*}{\protect\theoremname}
\theoremstyle{plain}
\newtheorem{lem}[thm]{\protect\lemmaname}
\theoremstyle{plain}
\theoremstyle{plain}
\newtheorem{prop}[thm]{\protect\propositionname}
\theoremstyle{remark}
\theoremstyle{remark}
\newtheorem{rem}[thm]{\protect\remarkname}
\newtheorem*{rem*}{\protect\remarkname}
\theoremstyle{definition}
\newtheorem{defn}[thm]{\protect\definitionname}
\theoremstyle{plain}
\providecommand{\assumptionname}{Assumption}
\providecommand{\claimname}{Claim}
\providecommand{\corollaryname}{Corollary}
\providecommand{\definitionname}{Definition}
\providecommand{\lemmaname}{Lemma}
\providecommand{\propositionname}{Proposition}
\providecommand{\remarkname}{Remark}
\providecommand{\theoremname}{Theorem}
\providecommand{\examplename}{Example}
\crefname{section}{Section}{Sections}
\crefname{appendix}{Appendix}{Appendices}
\crefname{figure}{Figure}{Figures}
\crefname{assumption}{Assumption}{Assumptions}
\crefname{thm}{Theorem}{Theorems}
\crefname{lem}{Lemma}{Lemmas}
\newtheorem*{lem*}{\protect\lemmaname}
\newcommand{\LOC}[1]{\mathrm{LOC}_{#1}}
\newcommand{\ee}{\operatorname{e}}
\newcommand{\ii}{\operatorname{i}}
\newcommand{\ZZ}{\mathbb{Z}}
\newcommand{\NN}{\mathbb{N}}
\newcommand{\RR}{\mathbb{R}}
\newcommand{\CC}{\mathbb{C}}
\newcommand{\PP}{\mathbb{P}}
\newcommand{\WW}{\mathbb{W}}
\newcommand{\bbLambda}{\mathbb{\Lambda}}
\newcommand{\calN}{\mathcal{N}}
\newcommand{\calF}{\mathcal{F}}
\newcommand{\hQ}{\hat{Q}}
\newcommand{\norm}[1]{\left\|#1\right\|}
\newcommand{\ip}[2]{\langle #1, #2 \rangle}
\newcommand{\dif}{\operatorname{d}}
\newcommand{\tr}{\operatorname{tr}}
\renewcommand{\Im}[1]{\operatorname{\mathbb{I}\mathbbm{m}}\{#1\}}
\renewcommand{\Re}[1]{\operatorname{\mathbb{R}\mathbbm{e}}\{#1\}}
\newcommand{\ve}{\varepsilon}
\newcommand{\vf}{\varphi}
\newcommand{\Id}{\mathds{1}}
\newcommand{\HH}{\mathcal{H}}
\newcommand{\BLO}[1]{\mathcal{B}(#1)}
\newcommand{\dist}{\mathrm{dist}}
\renewcommand{\vec}[1]{\mathbf{#1}}
\newcommand{\findex}{\operatorname{index}}
\newcommand{\supp}{\operatorname{supp}}
\newcommand{\im}{\operatorname{im}}
\newcommand{\ad}{\operatorname{ad}}
\newcommand{\Ad}{\operatorname{Ad}}
\newcommand{\HHB}{\mathcal{H}}
\newcommand{\HHE}{\hat{\mathcal{H}}}
\newcommand{\HE}{\hat{H}}
\newcommand{\WLOC}{\mathrm{WLOC}}
\newcommand{\WLOCC}[1]{\mathrm{WLOC}_{#1}}
\newcommand{\qProj}[1]{\mathcal{P}_{#1}}
\title{Fredholm Homotopies for Strongly-Disordered 2D Insulators}
\author{Alex Bols\\\footnotesize{QMATH, Department of Mathematical Sciences, University of Copenhagen}\footnote{alex-b@math.ku.dk}\\Jeffrey Schenker\\
	\footnotesize{Department of Mathematics, Michigan State University}\footnote{schenke6@msu.edu}\\Jacob Shapiro\\
	{\footnotesize Department of Physics, Princeton University}\footnote{jacobshapiro@princeton.edu}
}
\begin{document}
	
	\maketitle

	\begin{abstract}
		We study topological indices of Fermionic time-reversal invariant topological insulators in two dimensions, in the regime of strong Anderson localization. We devise a method to interpolate between certain Fredholm operators arising in the context of these systems. We use this technique to prove the bulk-edge correspondence for mobility-gapped 2D topological insulators possessing a (Fermionic) time-reversal symmetry (class AII) and provide an alternative route to a theorem by Elgart-Graf-Schenker \cite{EGS_2005} about the bulk-edge correspondence for strongly-disordered integer quantum Hall systems. We furthermore provide a proof of the stability of the $\ZZ_2$ index in the mobility gap regime. These two-dimensional results serve as a model for the study of higher dimensional $\ZZ_2$ indices.
	\end{abstract}
	
	\section{Introduction}
	
	Topological insulators (TIs) \cite{Hasan_Kane_2010} are characterized chiefly by their insulator condition, which is usually formulated as a \emph{gap requirement}, i.e., that the Hamiltonian has an interval of energies without any states \textemdash a spectral gap. However, some properties of these systems can  be explained only if one has, more generally, a \emph{mobility gap}, namely an interval of energies associated to dynamically localized eigenstates, a situation arising under \emph{strong disorder}. For example, the plateaus of the integer quantum Hall effect require this regime \cite{Graf_2007}, and certain physically-appealing interpretations of topological systems appear precisely in the strongly disordered regime, e.g., for Floquet systems \cite{NathanPRL16,Shapiro2019} with orbital magnetization.
	
	Curiously, all of the studies of topological systems in the mobility gap regime
	\cite{EGS_2005,Graf_Shapiro_2018_1D_Chiral_BEC,Shapiro2019,Shapiro2020} up until now relied crucially on local trace formulas for the invariants to prove the bulk-edge correspondence and stability of the indices. However, such trace formulas seem out of reach for systems whose invariants are $\ZZ_2$-valued, such as time-reversal (TR) invariant 2D systems. Indeed, a local trace formula for $\ZZ_2$ invariants has not been defined yet and there are reasons to doubt its existence altogether (since such a formula should imply continuity, but we know the index should be allowed to jump by even integers). For such $\ZZ_2$ systems, a different approach is required to study strongly-disordered mobility gapped systems without a local trace formula. 
	
	One possibility is to use index formulas in place of trace formulas, an idea first introduced by Bellissard and collaborators \cite{Bellissard_1994} to exhibit integrality and continuity of invariants. In \cite{Graf_Shapiro_2018_1D_Chiral_BEC,Fonseca2020} Fredholm index formulas were used to prove the bulk-edge correspondence, but these proofs were restricted to the spectral gap regime, since it was not clear how to perform homotopies of operators associated with Hamiltonians not possessing a spectral gap. Here we finally extend the Fredholm perspective to the mobility gap regime (by
	\emph{adapting} the scheme of \cite{Fonseca2020}), thus enabling homotopy arguments for strongly-disordered $\ZZ_2$ systems.
	
	Let us briefly describe one of the mathematical problems we shall face. Let $\HH$ be a separable Hilbert space and $\BLO{\HH}$ be the $C^*$ algebra of bounded linear operators on it. If $Q$ is an orthogonal projection and $F$ is Fredholm such that $[Q,F]$ is compact, then $$ \mathbb{Q}F := QFQ+Q^\perp $$ is a Fredholm operator. Indeed, by Atkinson's theorem \cite{Booss_Topology_and_Analysis}, $F$ has a parametrix $G$, and one verifies that $\mathbb{Q}G$ is a parametrix of $\mathbb{Q}F$. Moreover, if $A$ is an orthogonal projection, then \begin{align}
		\mathbb{Q} \exp(2\pi\ii A) \text{ is Fredholm} \label{eq:exponential of projection} \ ;
	\end{align} 
	indeed, $\exp(2\pi\ii A)=\Id$ so $ \mathbb{Q} \exp(2\pi\ii A)$ is the identity operator (thus trivially Fredholm). 
	In the context of disordered 2D-TIs,
	the topological invariants arise as indices of Fredholm operators of  the form $ \mathbb{Q} \exp(2\pi\ii A)$, with $A$ \emph{not a projection}, but nonetheless close to a projection in an approximate sense (see \cref{def:quasi-proj} below) so that \cref{eq:exponential of projection} still holds. This approach was used in \cite{Fonseca2020} to study disordered but still spectrally gapped systems. To deal with mobility gapped systems, we must \emph{further} weaken the sense in which $A$ is an approximate projection, while still keeping the basic logic leading to \cref{eq:exponential of projection}.
	
	An additional major feature of the present paper is the proof that the $\ZZ_2$ index is (deterministically) invariant with respect to the choice of the Fermi energy within the mobility gap. This fact has profound implications for the entire classification theory of topological insulators in the strong disorder. Indeed, our method lends itself to study the strong disorder regime of higher dimensional and different symmetry class systems, although here we merely concentrate on the two dimensional case.
	
	In the physics literature the question of the stability of the $\ZZ_2$ index has also been studied; see e.g. analytically in \cite{PhysRevB.73.045322,Mani_2016}, numerically in \cite{Orth2016,PhysRevMaterials.1.024201,doi:10.1143/JPSJ.80.053703} and experimentally in \cite{doi:10.1126/science.1148047,PhysRevLett.114.096802,PhysRevB.90.115305}. 
	
	
	This paper is organized as follows. We begin by describing the mathematical setting and the sense in which we take an operator to be \emph{approximately a projection.} We go on to define the induced edge system, the TR symmetry operator, and the relevant indices for TR invariant systems. In the heart of the paper, we prove the invariance of the $\ZZ_2$ index in \cref{thm:stability of index} and the bulk edge correspondence in \cref{thm:bec} and finally we delegate some technical points of the proofs to the appendix. Some further discussion about future directions is presented right before the end. Of independent interest may be the appendix discussing the so-called "SULE" basis and its applications, \cref{sec:The SULE basis}.
	
	\section{Setting}
	We consider tight-binding, single-electron models in two-dimensions so that our Hilbert space is either $\HH = \ell^2(\ZZ^2)\otimes\CC^N$ for a bulk, infinite, sample or $\hat{\HH} = \ell^2(\ZZ\times\NN)\otimes\CC^N$ for an edge, half-infinite, sample. Here $N\in\NN_{\geq1}$ is some fixed internal number of degrees of freedom we allow on each site (number of atoms in the unit cell, spin, iso-spin, etc.). Throughout the discussion edge objects shall carry a hat. 
	
	Between these two Hilbert spaces one has the natural injection $$ \iota:\hat{\HH}\hookrightarrow\HH $$ which extends a half-space function $\phi\in \hat{\HH}$
	into the full plane by taking $\iota\phi$ to be
	zero in the lower half plane. With its adjoint $\iota^\ast$ (restriction to the half-space) we find the relations $$ \iota^\ast\iota = \Id_{\hat{\HH}}\,;\quad \iota\iota^\ast = \Lambda_2 \ , $$ with $\Lambda_2$ being  orthogonal projection onto the upper half plane. We also will work with the projection onto the RHS of space, $\Lambda_1$, and oftentimes use the notation 
	\begin{equation}\label{eq:partialj}
		\partial_j A \equiv -\ii [\Lambda_j,A]
	\end{equation}for the non-commutative derivative of $A$ in direction $j=1,2$. 
	
	\subsection{Spatial constraints} If $A$ is an operator on either Hilbert space and $x,y$ are  points in $\ZZ^2$ (or $\ZZ\times\NN$) then $A_{xy}\equiv\ip{\delta_x}{A \delta_y}$ is an $N\times N$ matrix with entries in $\CC$. We are interested in formulating decay  in terms of its matrix norm as $\norm{x-y}\to\infty$ (off-diagonal decay) or as $x,y\to\infty$ (diagonal decay). The following notions are recalled from \cite[Section 3]{Shapiro2019}. They are formulated for $\HH$ but have analogous definitions for operators on $\hat{\HH}$.
	
	\begin{defn}[Local operator]\label{def:local operator}
		An operator $A\in\BLO{\HH}$ is \emph{local} if and only if for each $\alpha \in \NN$
		sufficiently large, 
		there exists $C_\alpha <\infty$ such that $$ \norm{A_{xy}} \leq C_\alpha (1+\norm{x-y})^{-\alpha}\qquad(x,y\in\ZZ^2)\,. $$
		We denote the space of local operators as $\LOC{}$.	
	\end{defn}
	
	Operators in $\LOC{}$ arise when applying the smooth functional calculus to operators which are local with exponential off-diagonal decay rate, see \cite[Lemma A.3]{Elbau_Graf_2002} e.g.. For us these will be smooth functions of Hamiltonians.
	We also have operators whose off-diagonal decay rate is not uniform in the diagonal direction,
	\begin{defn}[Weakly-local operator] An operator $A\in\BLO{\HH}$ is \emph{weakly-local} if and only if there is some $\nu\in\NN$ such that for any $\mu\in\NN$ sufficiently large there is a constant $C_\mu<\infty$ so that 
		\begin{align}
			\norm{A_{xy}} \leq C_\mu (1+\norm{x-y})^{-\mu}(1+\norm{x})^{+\nu}\qquad(x,y\in\ZZ^2)\,. 	\label{eq:weakly-local operator}
		\end{align}
		We denote the space of all weakly-local operators as $\WLOC$.
	\end{defn}
	
	In our application, the sufficiently large threshold for $\mu$ is, say, $10$ and fixed throughout the paper (it is dictated by the finite number of algebraic operations in the $\WLOC$ *-algebra, see \cite[Section 3]{Shapiro2019}).
	
	Of course if $\nu = 0$, we recover the notion of a local operator. Such weakly-local
	estimates arise (almost surely) for measurable functions of random Hamiltonians which exhibit Anderson localization \cite{EGS_2005}, e.g., $\chi_{(-\infty,\mu)}(H)\in\WLOC$ almost-surely for a Hamiltonian $H$ which is Anderson localized around $\mu$. In \cite[Section 3]{Shapiro2019} it was shown that $\WLOC$ is a *-algebra.
	
	Finally, we have the notion of confined operators,
	with matrix elements that decay along one axis:
	\begin{defn}[Weakly-local and confined operator] An operator $A\in\BLO{\HH}$ is \emph{weakly-local and confined in direction $j$ ($j=1$ or $2$)} if and only if there is some $\nu>0$ such that for any $\mu>0$ sufficiently large there is a constant $C_\mu<\infty$ so that 
		\begin{align}
			\norm{A_{xy}} \leq C_\mu(1+\norm{x-y})^{-\mu}(1+|x_j|)^{-\mu}(1+\norm{x})^{+\nu}\qquad(x,y\in\ZZ^2)\,. \label{eq:weakly-local and confined estimate}
		\end{align}
		We denote the space of such operators as $\WLOCC{j}$. The space of operators obeying the estimate \cref{eq:weakly-local and confined estimate} with $\nu=0$ is denoted $\LOC{j}$.
	\end{defn}
	
	It is a fact that $\WLOCC{j}$ forms a *-closed two-sided ideal within $\WLOC$, see \cite[Section 3]{Shapiro2019}. Moreover, \begin{align}\WLOCC{1}\WLOCC{2} \subseteq \WLOCC{1}\cap\WLOCC{2} \subseteq \mathcal{J}_1 \ , \label{eq:confined operators are trace class}\end{align} the latter space being the trace-class operators. Here the juxtaposition  $\WLOCC{1}\WLOCC{2}$ on the left indicates the set of all pairwise products where each factor comes from the corresponding space. Finally, we have the implication \cite[Cor. 3.16]{Shapiro2019} \begin{align}A\in\WLOC\Longrightarrow\partial_j A \in \WLOCC{j}\, ,\label{eq:derivative of weakly local operator is confined}\end{align}
	with $\partial_jA$ as in \cref{eq:partialj}.
	\subsection{Deterministic mobility-gapped insulators}
	\begin{defn}[Physical system]\label{def:Hamiltonians}
		A \emph{Hamiltonian} is an operator $H\in\LOC{}$ which is self-adjoint.	\end{defn} 
	
	\begin{defn}[Insulator]\label{def:deterministic insulator} Let $\Delta \subset \RR$ be an interval and let $B_1(\Delta)$ denote the set of Borel measurable functions $f$ which are constant on $(-\infty,\inf \Delta]$ and on $[\sup \Delta,\infty)$ with $\norm{f}_{\infty}\le 1$. A Hamiltonian $H$ is \emph{mobility-gapped on}
		$\Delta$ if and only if (1) $f(H)\in\WLOC$ for all $f\in B_1(\Delta)$, where the estimate \cref{eq:weakly-local operator} is uniform in the choice of such $f$, and (2) all eigenvalues of $H$ within $\Delta$ are of finite multiplicity.
	\end{defn}
	Of course, if $H$ is spectrally gapped on $\Delta$ (i.e., $\sigma(H)\cap\Delta=\varnothing$), then  $H$ is mobility-gapped on $\Delta$.  Indeed, in this case $f(H)\in \LOC{}$ (not just $\WLOC$) for $f\in B_1(\Delta)$, with uniform bounds, as one can show using the Combes-Thomas estimate (see \cite[Theorem 10.5]{AizenmanWarzel2016} in our context of discrete Schr\"odinger operators) for the smooth functional calculus \cite{Elbau_Graf_2002} (with regards to $f(H)$, the measurable function $f\in B_1(\Delta)$ maybe be deformed to be smooth as the deformation only affects $\left.f\right|_\Delta$). However, $H$ may have a mobility gap in $\Delta$ while $\sigma(H)\cap\Delta\neq \varnothing$. For example, this property holds (almost surely)
	if $H$ is a random operator exhibiting Anderson localization within $\Delta$. To see this, one combines \cite[Eq-n (E.6)]{Aizenman_Graf_1998} together with \cite[Proposition A.1]{Shapiro2019}. It can be shown that if $\Delta$ is a mobility gap for $H$, then the DC conductivity of $H$, computed using the Kubo formula, vanishes for energies $E\in \Delta$ \cite{Aizenman_Graf_1998}.
	\begin{rem}
	Whether Anderson localization in the strong coupling regime can be proven for a topologically-non-trivial time-reversal-invariant $\ZZ_2$ model remains currently open (cf. \cite{Graf_Shapiro_2018_1D_Chiral_BEC} and the localization proof in \cite{Shapiro_2021} for the chiral topological case or \cite{EGS_2005} for a proof of Anderson localization for the non-trivial IQHE case). See the concrete model we propose to study in the future in \cref{sec:example}.
	\end{rem}
	\subsection{Quasi-projections} We
	now turn to the heart of the matter concerning index theory. We want to define a space of operators which are almost projections as one probes them far away along some axis. 
	\begin{defn}[$j$-quasi-projection]\label{def:quasi-proj} An operator $A\in\BLO{\HH}$ is a \emph{$j$-quasi-projection} ($j=1$ or $2$) if and only if (1) $A$ is self-adjoint, (2) $A\in\WLOC$, and (3) $A^2-A\in\WLOCC{j}$.  
	\end{defn}
	
	We denote the space of such operators as $\qProj{j}$, or $\qProj{j}(\WLOC)$ to emphasize the presence of $\WLOC$ in (1) and (2).  We write $\qProj{j}(\LOC{})$ for the set of $j$-quasi-projections such that $A\in \LOC{}$ and $A^2-A\in \LOC{j}$. Of course orthogonal projections which happen to be in $\LOC{}$ or $\WLOC$ are trivially in these spaces.
	
	What concerns us in this paper is the space $\qProj{j}(\WLOC)$ since our operators arise in the context of Anderson localization, whereas in \cite{Fonseca2020} the relevant object was $\qProj{j}(\mathrm{LOC})$ thanks to the spectral gap condition. In \cite[Prop. A4]{Fonseca2020} it was shown that if $A\in\qProj{2}(\mathrm{LOC})$ then $\exp(2\pi\ii A)-\Id\in\mathrm{LOC}_2$ (cf. \cref{eq:exponential of projection}) so that $[\Lambda_1, \exp(2\pi\ii A)]\in\mathrm{LOC}_1\cap\mathrm{LOC}_2$. In particular, thanks to \cref{eq:confined operators are trace class} the commutator is trace-class, hence compact, so that for such $A$, $\mathbb{\Lambda}_1 \exp(2\pi\ii A) \ = \ \Lambda_1 \exp(2\pi \ii A) \Lambda_1 + (1-\Lambda_1)$ is Fredholm. 
	Indeed,  $\exp(2\pi\ii A)-\Id = h(A)(A^2-A)$ with $h$ an analytic function.  Because $A$ is local, $h(A)$ is also local (using the Combes-Thomas estimate). Hence one may use the ideal property of local and confined operators to conclude that $\exp(2\pi\ii A)-\Id\in\mathrm{LOC}_2$.

	On the other hand, in the mobility gap regime, 
	one encounters operators $A$ that are merely weakly-local, but not local.
	For such operators, it is not known (to us) whether a Combes-Thomas estimate holds. Hence it is unclear whether $h(A)$ is also weakly-local. That is, it is an open question whether the weakly-local property is preserved under analytic functions. Hence, we see no
	direct way to prove that $\mathbb{\Lambda}_1 \exp(2\pi\ii A)$ is Fredholm. \\

	\noindent \emph{Open Questions:} Does the Combes-Thomas estimate hold for self-adjoint weakly-local operators?  Is $\WLOC$ closed under convergent power series? \\
	
	We shall deal with this problem by circumventing it, in \cref{thm:Fredholm homotopies in the mobility gap regime}.  Thanks to the fact that we are dealing with Fredholm operators, we can avoid talking about holomorphic functions and instead use  polynomials. Then the algebraic (rather than unestablished topological) properties of $\WLOC$ will suffice.
	
	\subsection{Edge systems}
	\begin{figure}
	    \centering
	    	\begin{center}
		\begin{tikzpicture}[scale=0.8]
			
			\draw [lightgray, fill=lightgray] (-4, 0) rectangle (4,3);

			\draw[very thick,line width=0.05cm,->] (-4,-0) -- (4,-0);
			
			\draw[pattern=north west lines, pattern color=black] (-4,-0.2) rectangle (4,0);
			\draw[very thick,line width=0.05cm,->] (0,0) -- (0,3);

			\node [below left] at (4,0.9) {$\mathbb{Z}$};   
			\node [right] at (0.2,2.8) {$\mathbb{N}$};     
			
			\node [above] at (2, -1.5) {Vacuum};    
			\node [above] at (2, 1) {Material};     
			\node [above] at (0, -1) {Edge};
			
		\end{tikzpicture}
	\end{center} 
	    \caption{The geometry of truncating the system to the edge.}
	    \label{fig:edge truncation}
	\end{figure}
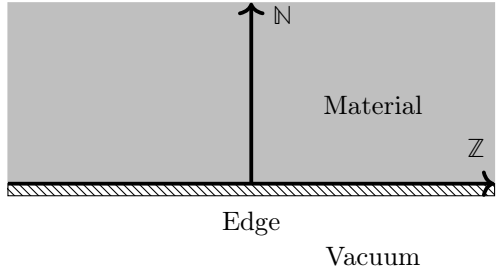
	Edge systems are Hamiltonians on $\HHE$, as in \cref{def:Hamiltonians}. However, unlike bulk systems, they are generically  \emph{not} insulators. Before we turn to the mobility gap regime, it is instructive to consider the spectral gap case; then, there is a convenient way to encode the fact an edge Hamiltonian is associated with a bulk one which possesses a spectral gap, \emph{without making reference to said bulk Hamiltonian}. The following definition is adapted from \cite[Definition 3.4]{shapiro2020tightbinding}:
	
	\begin{defn}[Edge systems with a bulk spectral gap]\label{def:edge system with bulk gap}
		Let $\Delta\subseteq\RR$ be a given interval. An edge system $\hat{H}\in\BLO{\hat{\HH}}$ has a \emph{bulk-spectral-gap} within $\Delta$ if and only if there exists a smooth $g:\RR\to[0,1]$ with $g(\lambda)=1$ for $\lambda<\Delta$,  $g(\lambda)=0$ for $\lambda>\Delta$, and such that $ g(\HE)\in\qProj{2}(\mathrm{LOC})$.
	\end{defn}
	
	The rationale behind this definition is as follows: $g$ differs from $\chi_{(-\infty,\mu)}$, a projection, only within the bulk gap $\Delta$. Edge states with energies in $\Delta$, however, are localized near the edge (see \cref{fig:edge truncation}) and hence become less relevant as one goes into the bulk, whence $$g(\HE)\to g(H)=\chi_{(-\infty,\mu)}(H)\qquad(\text{into the bulk})\,.$$
	
	The more intuitive notion of an edge system (which calls for a given bulk insulator) remains appropriate in the mobility gap regime:
	\begin{defn}[Edge systems with a bulk mobility gap]\label{def:edge system with bulk mobility gap}
		An edge Hamiltonian $\HE\in\LOC{}(\HHE)$ has a \emph{bulk-mobility-gap} within $\Delta$ if and only if
		\begin{align} \HE-\Ad_\iota K\in\LOC{2}\,  \label{eq:edge Hamiltonian compatible with a bulk one}\end{align}
		for some bulk insulator $K\in \LOC{}(\HH)$ within  $\Delta$ (as in \cref{def:deterministic insulator}). Here we use the
		notation $\Ad_\iota K \equiv \iota^\ast K\iota$.
	\end{defn}
	
	In \cite[Lemma A3 iii]{Elbau_Graf_2002} one can find a proof that \cref{def:edge system with bulk mobility gap} implies \cref{def:edge system with bulk gap} if the bulk Hamilotnian is spectrally gapped.

	\subsection{Time-reversal symmetry and the $\Theta$-odd Fredholm index}
	Time reversal (TR) symmetry is a $\CC$-conjugate-linear and anti-unitary map $\Theta:\HH\to\HH$, i.e., $$ \Theta (\psi +\alpha \phi) = \Theta\psi +\bar{\alpha} \Theta \phi \quad \text{and} \quad \langle\Theta\psi,\Theta\vf\rangle = \langle\vf,\psi\rangle\qquad(\psi,\vf\in\HH, \ \alpha \in \CC) \ , $$ 
	such that \begin{align}[\Theta,X_j]=0\,,\label{eq:TR commutes with position operators}\end{align} with $X_j$ being the position operator in direction $j=1,2$. We focus here on the Fermionic case, so we require
	\begin{align}\Theta^2=-\Id\,.\end{align}
	
	We shall assume a TR symmetry $\Theta$ is defined on the
	bulk Hilbert space.  Due to \cref{eq:TR commutes with position operators}, $\iota^*\Theta \iota$ is a TR-symmetry on the edge Hilbert space.  Since the two TR operators are equal fiberwise (on fibers over the upper half plane),  we do not distinguish them with different notation. 
	
	A Hamiltonian $H$ is called TR invariant (TRI) if and only if $[H,\Theta]=0$. Thanks to \cref{eq:TR commutes with position operators}, if $H$ is TRI, then the edge Hamiltonian $\Ad_\iota H$ is also TRI.  However, a general edge Hamiltonian $\HE$ that descends from $H$ (according to \cref{def:edge system with bulk mobility gap}) may not be TRI, since the boundary term $\HE-\Ad_\iota H \in \LOC{2}$ need not be so.

	Recall that the space of Fredholm operators, denoted $\calF$, is the set of all bounded linear operators $F$ such that both $\ker F$ and  $\ker F^\ast$ are finite dimensional and $\im F$ is closed (see, e.g., \cite{Booss_Topology_and_Analysis}). For such an operator, the $\ZZ$-valued Fredholm index is $$ \findex F \equiv \dim \ker F - \dim \ker F^\ast , $$ and the resulting map $\findex:\calF\rightarrow \ZZ$ is continuous in the operator norm topology and stable under compact perturbations. In fact, $\findex$ ascends to a bijection between the set of path-connected components of $\calF$ and $\ZZ$.  Moreover, it obeys a logarithmic law $$\findex AB = \findex A + \findex B\,.$$
	
	\begin{defn}[$\Theta$-odd Fredholm operator] A Fredholm operator $F\in\calF$ is called $\Theta$-odd if $$ F = -\Theta F^\ast \Theta\,. $$ The space of all such operators is denoted as $\calF_\Theta$. Clearly $\calF_\Theta\subseteq\calF_0$, the zero-index path-component of $\calF$, thanks to the logarithmic law of $\findex$.
	\end{defn}
	
	Atiyah and Singer introduced a $\ZZ_2$-valued index for $\Theta$-odd operators,	
	$$ \findex_2 F := \dim \ker F \mod 2 \ , $$
	and proved that $\findex_2$ is continuous, stable under compact perturbations, and ascends to a bijection from the set of path-connected components of $\calF_\Theta$ to $\ZZ_2$ \cite{Atiyah1969}.  Different proofs of these facts can be found also in \cite{SB_2015,Fonseca2020}.  (Atiyah and Singer referred to such operators as skew-adjoint and used a slightly different notation.)

	\section{Example: a disordered BHZ model}\label{sec:example}
	The Bernevig, Hughes, and Zhang model (BHZ henceforth) \cite{doi:10.1126/science.1133734} is a model on $\ell^2(\ZZ^2)\otimes\CC^2\otimes\CC^2$ given by $$ H_{\mathrm{BHZ}} =\left(a\Id+\Re{R_1}+\Re{R_2}\right)\otimes\sigma_3\otimes\Id+\Im{R_2}\otimes\sigma_2\otimes\Id+\Im{R_1}\otimes\sigma_1\otimes\sigma_3 $$ where $\sigma_1,\sigma_2,\sigma_3$ are the three Pauli spin matrices, $R_j$ is the right-shift operator along the $j=1,2$ direction: $$ (R_j \psi)(x) \equiv \psi(x-e_j)\qquad(x\in\ZZ^2) $$ $e_j$ the standard basis vectors, and $\Re{A} := \frac{1}{2}(A+A^\ast);\,\Im{A}:=\frac{1}{2\ii}(A-A^\ast)$. Here $a\in\RR$ is a parameter of the model which controls the selection of the topological phase. Time-reversal for this model is defined as $$ \Theta := \mathcal{C} \Id\otimes\Id\otimes(-\ii\sigma_2) $$ and $\mathcal{C}$ is complex-conjugation of scalars on the complex Hilbert space, which makes $\Theta$ anti-$\CC$-linear.
	
	One may verify now that: $\Theta^2=-\Id$ and that $H_{\mathrm{BHZ}}$ has a spectral gap about zero energy for $a\neq0,\pm2$ (via the Bloch decomposition). Later on we will define a topological $\ZZ_2$ index applicable to this model. It is well known \cite{doi:10.1126/science.1133734} that this index is non-trivial for $0<|a|<2$ and trivial for $|a|>2$. 
	
	There are various ways to make this model disordered. One obvious example is:
	\begin{align}
	    H_{\mathrm{BHZ}}^{\mathrm{disordered}} = H_{\mathrm{BHZ}} + \omega(X)\otimes\sigma_3\otimes\Id
	\end{align} where $X$ is the position operator and $\Set{\omega(x)}_{x\in\ZZ^2}$ is a sequence of i.i.d. random real variables, all distributed with some probability measure $\mu$ on $\RR$. Hence the parameter $a$ has been replaced with $\omega$ which varies throughout the plane. We conjecture that depending on the properties of $\mu$, e.g., its support, the topological phase would be selected, and $\mu$ may be chosen so that there is a well-defined topological phase \emph{yet} there is no spectral gap about zero. We postpone the study of this disordered model to future work.
	
	\section{The topological indices and our main result}
	We (mostly) do not distinguish between the IQHE ($\ZZ$-index) and the TRI ($\ZZ_2$-index) cases.  Hence $\findex A $ will either indicate  $\dim\ker A-\dim\ker A^\ast$ or $\dim\ker A\mod 2$ depending on context. Sometimes we will make this explicit by writing $\findex_{(2)}$ which means an equation either holds for $\findex$ or for $\findex_{(2)}$. When $H$ is TRI, the $\ZZ$-index is trivial and it is appropriate to use the $\ZZ_2$ index instead. We will mostly use the fact the index is stable under norm-continuous and compact perturbations, which is true for both $\findex$ and $\findex_2$. 
	
	\subsection{The bulk topological index}
	
	Let $H$ be a $\Delta$-insulator as in \cref{def:deterministic insulator}. Its associated bulk index is given by \begin{align}\calN_{(2)}(H,\Delta) \ := \ \findex_{(2)} \PP U \ \equiv \ \findex_{(2)} (PUP + P^\perp) \ ,  \label{eq:flux insertion index}\end{align} where $P := \chi_{(-\infty,\mu)}(H)$ is the Fermi projection for $H$ at Fermi energy $\mu\in\Delta$, and 
	$$U :=\exp(\ii\arg(X_1+\ii X_2))$$ is the flux insertion operator.  Since $\mu$ falls in the mobility gap $\Delta$ of $H$, so that $P\in \WLOC$, it follows that $\PP U$ is Fredholm and the index is well defined. In \cref{thm:stability of index} below we see that in agreement with our notation, $\calN_{(2)}(H,\Delta)$ indeed does not depend on the choice of $\mu\in\Delta$. In fact, it is clear that $\calN_{(2)}(H,\Delta)$ depends only on the connected component of $\Delta$ (connectedness meant in the sense of \cref{def:deterministic insulator} holding).   
	
	$\calN$ was associated with the Hall conductivity in the integer quantum Hall effect in \cite{Bellissard_1994} (who already noted that it remains well defined in the mobility gap) via the Kubo formula; see also \cite{ASS1994_charge_def}. $\calN_2$ was associated with the Fu-Kane-Mele $\ZZ_2$ invariant by Schulz-Baldes \cite{SB_2015}.
	In \cite{Aizenman_Graf_1998} there is a self contained proof that \begin{align}P\in\WLOC \Longrightarrow \mathbb{P}U\in\calF\,.\end{align}
	
	\subsection{Stability of the bulk index with respect to the Fermi energy}
	In the present section, we make the dependence on the Fermi energy explicit by denoting $P_\mu := \chi_{(-\infty,\mu)}(H)$. We shall prove \begin{thm}\label{thm:stability of index}
		If $\Delta\subseteq\RR$ is a mobility gap for $H$ as in \cref{def:deterministic insulator} and $P_\mu \equiv \chi_{(-\infty,\mu)}(H)$ then the function $$\Delta\ni\mu\mapsto \findex_{(2)} \PP_\mu U \in \ZZ_{(2)}$$ is constant.
	\end{thm}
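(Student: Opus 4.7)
The plan is to combine direct finite-rank perturbation arguments for isolated eigenvalue crossings with the Fredholm homotopy machinery of \cref{thm:Fredholm homotopies in the mobility gap regime} to handle eigenvalue accumulations; the latter is the principal obstacle.

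For the easy case, suppose $\mu<\mu'$ lie in $\Delta$ with only finitely many eigenvalues of $H$ in $(\mu,\mu']$. Setting $Q:=P_{\mu'}-P_\mu$, the mobility-gap hypothesis ensures $Q$ is an orthogonal projection of finite rank (each eigenspace in $\Delta$ is finite-dimensional by \cref{def:deterministic insulator}, and only finitely many are involved). Using $P_\mu Q=QP_\mu=0$ and $P_{\mu'}=P_\mu+Q$, the direct expansion
\[
\mathbb{P}_{\mu'} U - \mathbb{P}_\mu U \;=\; P_\mu U Q + Q U P_\mu + Q U Q - Q
\]
has range contained in $\mathrm{ran}(Q)\cup P_\mu U(\mathrm{ran}(Q))$, hence is finite rank and a fortiori compact. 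Both endpoint operators are Fredholm by the hypothesis $P_\nu\in\WLOC$, and in the TRI case $[\Theta,P_\nu]=0$ ensures both lie in $\calF_\Theta$. Stability of $\findex_{(2)}$ under compact (resp.\ compact $\Theta$-odd-preserving) perturbations then yields $\findex_{(2)}\mathbb{P}_\mu U=\findex_{(2)}\mathbb{P}_{\mu'}U$.

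The hard case, and the main obstacle, is when $(\mu,\mu']$ contains infinitely many eigenvalues of $H$, which may accumulate arbitrarily densely in a mobility gap. Then $Q$ is not compact, $\mu\mapsto P_\mu$ fails norm continuity, and chaining finite-rank crossings does not close the argument. To address this I would invoke \cref{thm:Fredholm homotopies in the mobility gap regime}. Following the strategy of \cite{Fonseca2020} adapted to the $\WLOC$ setting, one identifies $\findex_{(2)}\mathbb{P}_\mu U$ with an index of the form $\findex_{(2)}\mathbb{Q}\exp(2\pi\ii A_\mu)$, where $Q$ is a fixed half-plane projection and $A_\mu\in\qProj{j}(\WLOC)$ is a quasi-projection obtained from $P_\mu$ by a spatial truncation, and then constructs a continuous path $t\in[0,1]\mapsto A_t\in\qProj{j}(\WLOC)$ joining $A_\mu$ to $A_{\mu'}$ by smoothly interpolating the underlying Fermi function $f_t\in B_1(\Delta)$. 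The homotopy theorem then promotes this to a norm-continuous path $t\mapsto\mathbb{Q}\exp(2\pi\ii A_t)$ through Fredholm operators, forcing $\findex_{(2)}$ to be constant.

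The hardest part will be ensuring the interpolating family actually lies in $\qProj{j}(\WLOC)$ throughout: the naive linear interpolation $A_t=(1-t)P_\mu+tP_{\mu'}$ yields $A_t^2-A_t = -t(1-t)(P_{\mu'}-P_\mu)$, which is in $\WLOC$ but is not confined in any direction, so this path is \emph{not} admissible. The spatial cutoff by a half-plane projection is what translates weak locality of Fermi projections into directional confinement via \cref{eq:derivative of weakly local operator is confined}, thereby restoring $A_t^2-A_t\in\WLOCC{j}$. Because $\WLOC$ is not known to be closed under analytic functional calculus (an issue emphasized in the excerpt), the verification of Fredholmness along the path must proceed at the polynomial/algebraic level — exactly what the forthcoming \cref{thm:Fredholm homotopies in the mobility gap regime} is designed to enable.
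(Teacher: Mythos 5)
Your ``easy case'' (finitely many eigenvalues in $(\mu,\mu']$) is correct but is not needed as a separate case, and your ``hard case'' strategy has a genuine gap. Concretely: your plan for accumulating eigenvalues is to interpolate the Fermi function $f_t$ smoothly between $\chi_{(-\infty,\mu)}$ and $\chi_{(-\infty,\mu')}$ and take $A_t$ to be a spatial truncation of $f_t(H)$ (say $A_t = f_t(H)\Lambda_2 f_t(H)$ in Kitaev's form) so that \cref{thm:Fredholm homotopies in the mobility gap regime} applies. But once $f_t$ is not a characteristic function, $A_t$ fails to be a $2$-quasi-projection. Writing $f := f_t(H)$,
\begin{equation*}
A_t^2 - A_t \ = \ f\Lambda_2[f,\Lambda_2]f \ + \ f\Lambda_2(f^2-f) \ + \ f\Lambda_2(f^2-f)\Lambda_2 f \ ,
\end{equation*}
and while the first term lies in $\WLOCC{2}$ by \cref{eq:derivative of weakly local operator is confined}, the other two contain $f^2 - f = f_t(H)^2-f_t(H)\in\WLOC$, which is \emph{not} confined, and left-multiplication by $\Lambda_2$ does not create decay as $x_2\to+\infty$. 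The inclusion $P\Lambda_2 P\in\qProj{2}$ uses $P^2=P$ in an essential way; you observed yourself that the naive linear interpolation $(1-t)P_\mu+tP_{\mu'}$ breaks for this reason, but the spatially truncated version has the same defect. So the homotopy you outline does not exist, and \cref{thm:Fredholm homotopies in the mobility gap regime} cannot be brought to bear on this theorem in the way you propose.

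The paper's proof has two ingredients you did not anticipate, and it makes no finite/infinite-rank distinction. Setting $P:=P_\mu$ and $Q:=P_{\mu'}-P$, (i) the cross terms $PUQ$, $QUP$ are compact for \emph{any} such $Q$, via a Schatten estimate: $|PUQ|^2 = Q(U^\ast PU-P)Q$ and $U^\ast PU - P$ is Schatten-$3$ (a consequence of \cref{lem:flux_piercing}), so $PUQ$ is Schatten-$6$. This subsumes your easy case and, together with additivity over direct sums, gives $\findex_{(2)}\PP_{\mu'}U=\findex_{(2)}\PP U+\findex_{(2)}\mathbb{Q}U$. (ii) One then needs $\findex_{(2)}\mathbb{Q}U=0$ for a possibly infinite-rank spectral projection $Q$ onto a subset of the mobility gap; this is \cref{prop:index is zero in mobility gap}, and it is not a homotopy argument at all. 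One uses a SULE basis $\{\psi_n\}$ for $\operatorname{ran}Q$ with localization centers $\{x_n\}$, defines a comparison unitary $V$ by $V\psi_n = \ee^{\ii\arg(x_n\cdot e_1+\ii x_n\cdot e_2)}\psi_n$ on $\operatorname{ran}Q$ and $V=\Id$ on $\operatorname{ran}Q^\perp$, and shows $(U-V)Q$ is Schatten, hence compact, so that $\findex\mathbb{Q}U=\findex\mathbb{Q}V=0$ since $[V,Q]=0$. This SULE-basis construction is the key missing idea in your proposal; it is what replaces a homotopy through quasi-projections, which, as shown above, is not available here.
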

	
	\begin{rem}This theorem is significant in regards to localization theory because it allows to exhibit explicit time-reversal invariant models for which complete localization in two dimensions fails. Indeed, take any random model with a non-trivial $\ZZ_2$ invariant (e.g. a random version of the BHZ model \cite{BHZ_2006,Fu_Kane_2007}). Then at $\mu$ below the bottom of the spectrum the invariant is clearly zero, whereas (by assumption) somewhere it is non-zero. Hence in going from these two points, there must have been a value of $\mu$ for which $P_\mu$ failed to be $\WLOC$ (for otherwise the index would have been well-defined throughout the interpolation, but we have shown that if that were the case the index would also be constant).  This argument is analogous to one used to prove delocalization for random Landau Hamiltonians via the quantum Hall conducatance \cite{Germinet_Klein_Schenker_2007}, with the difference that in the current setting the operators are TRI, although with $\Theta^2=-1$.  (For TRI systems with $\Theta^2=1$, the scaling theory of localization \cite{Anderson_1979_PhysRevLett.42.673} predicts complete localization, i.e., that $P_\mu\in \WLOC$ for all $\mu$, for systems with extensive randomness; see \cite{Evers_Mirlin_2008} for a review of what is to be expected.)
	\end{rem}
	
	For the IQHE, \cref{thm:stability of index} was proved in \cite[Prop. 3]{EGS_2005} using the Kubo trace formula, and not using the integrality of the index. To the best of our knowledge, the result is new for the $\ZZ_2$ index case. In \cite{Shapiro2020} this statement was boosted to invariance of $\calN$ w.r.t. $H$, however, the argument there does not carry over for $\calN_2$.
	
	We are aware of similar statements that have been made in \cite[Theorem 5]{Schulz-Baldes_2015_Z2} and in \cite[(2.2)]{Katsura_Koma_2018_doi:10.1063/1.5026964}. Howwever, it appears these proofs contain gaps. Indeed, in the former one, it is not clear that $g(H)P=g(H)$ may be arranged in the mobility gap regime, although this identity is used in the proof. In the latter, it is not clear that their equation (2.1) can hold at a fixed value of the Fermi energy $E_F$ for their $H'$ \textemdash \ see our \cref{lem:SULE implies Green function decay} below in which a similar bound is obtained for almost every energy $E$.
	
	\begin{proof}[Proof of \cref{thm:stability of index}]
		Let $\mu'$ be a point nearby $\mu$ within $\Delta$. We want to show that $\findex \PP_\mu U = \findex \PP_{\mu'} U $. Without loss of generality, we take $\mu'>\mu$ and write $P := P_\mu$ and $Q := P_{\mu'}-P$. So $Q$ projects onto the (localized) states within $[\mu,\mu')$, both $P$ and $Q$ are $\mathrm{WLOC}$, and $PQ = 0$. Then $$\PP_{\mu'} U  \ = \  (P+Q)U(P+Q)+(P+Q)^\perp \ = \  PUP+QUQ+(P+Q)^\perp + PUQ +QUP 
		\,. $$ We claim that $PUQ$ is compact.  Indeed, we have
		$$|PUQ|^2  \ = \ (PUQ)^\ast PUQ \ = \ QU^\ast P U Q \ = Q(U^\ast P U-P) Q  \, .$$
		
		Since $U^\ast P U-P$ is Schatten-3 \cite{ASS1994_charge_def}, it follows that $PUQ$ is Schatten-6 and hence compact. As a result,  $$ \findex \PP_{\mu'}U \ = \ \findex \left [ PUP+QUQ+(P+Q)^\perp \right ]\,. $$ Since the three terms within the index on the right hand side are in fact operators on direct sumands of the Hilbert space, we claim that
		$$ \findex \left [ PUP+QUQ+(P+Q)^\perp \right ] \ = \ \findex \left [ \PP U \oplus \mathbb{Q} U \right ] \ .$$
		To see this, note that
		$$\ker \left [ PUP+QUQ+(P+Q)^\perp \right ] \ \cong \ \ker PUP \oplus \ker QUQ \ \cong \ \ker \PP U \oplus \ker \mathbb{Q} U \, ,$$
		and similarly for the adjoint operator (note that  $\ker (P+Q)^\perp =\operatorname{ran} (P+Q)$).
		For either the $\ZZ$ or the $\ZZ_2$ index, we have additivity under direct sums:  $\findex A\oplus B = \findex A +\findex B$ (with addition modulo $2$ in the $\ZZ_2$ case). Thus $\findex \PP_{\mu'} U  =  \findex \PP U  +  \findex \mathbb{Q} U$.  To complete the proof it suffices to show $\findex \mathbb{Q} U = 0$, which is the content of the following proposition.
	\end{proof}
	
	\begin{prop}\label{prop:index is zero in mobility gap}
		If $Q$ is a spectral projection of $H$ onto any subset of the mobility gap $\Delta$, then $$\findex \mathbb{Q}U = 0\,.$$
	\end{prop}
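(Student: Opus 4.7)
My plan is to deform $U$ to the identity through a norm-continuous family $U_s$ of unitary multiplications along which $\mathbb{Q}U_s$ remains a Fredholm operator (and $\Theta$-odd in the TRI case), and then invoke homotopy invariance of $\findex_{(2)}$.

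Concretely, set $U_s:=\exp(\ii s\arg(X_1+\ii X_2))$ for $s\in[0,1]$, so $U_0=\Id$ and $U_1=U$. The map $s\mapsto U_s$ is norm-continuous ($\|U_s-U_t\|_\infty\le \pi|s-t|$), and hence so is $s\mapsto \mathbb{Q}U_s$. Each $U_s$ is a unitary multiplication by $e^{\ii s\arg(x)}$, smooth away from the origin with gradient of order $s/\|x\|$. The anti-linearity of $\Theta$ together with $[\Theta,X_j]=0$ yields $\Theta U_s\Theta=-U_{-s}$; combined with $[\Theta,Q]=0$ (as $Q=\chi_{\Delta_0}(H)$ with $H$ TRI in the $\ZZ_2$ setting), a direct calculation shows $\mathbb{Q}U_s$ is $\Theta$-odd for every $s\in[0,1]$.

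The technical heart of the argument is to verify that $\mathbb{Q}U_s$ is Fredholm for every $s$, which, by Atkinson with parametrix $QU_s^*Q+Q^\perp$, reduces to compactness of $[Q,U_s]$. The matrix-element bound
\[
\|[Q,U_s]_{xy}\|\ =\ \|Q_{xy}\|\cdot|U_s(x)-U_s(y)|\ \lesssim\ s\,\|Q_{xy}\|\cdot\tfrac{|x-y|}{\min(\|x\|,\|y\|)}
\]
(valid for $\|x\|,\|y\|\ge 1$), combined with the weakly-local decay of $Q$, gives off-diagonal decay in $|x-y|$ with an extra factor of $1/\|x\|$ along the diagonal---exactly the decay that at $s=1$ was shown in \cite{Aizenman_Graf_1998} to imply a Schatten-class (hence compact) bound, now with $s$ entering as an overall multiplicative constant. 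So the norm-continuous path $s\mapsto \mathbb{Q}U_s$ lies entirely in $\calF$ (resp.\ $\calF_\Theta$). Stability of $\findex_{(2)}$ under such deformations makes $s\mapsto \findex_{(2)}\mathbb{Q}U_s$ constant, and evaluating at $s=0$ (where $\mathbb{Q}U_0=\Id$) yields $\findex_{(2)}\mathbb{Q}U=\findex_{(2)}\Id=0$.

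The main obstacle I anticipate is precisely this uniform-in-$s$ Fredholmness of $\mathbb{Q}U_s$: one must check that the WLOC compactness estimate for $[Q,U]=[Q,U_1]$ transfers without loss to all $s\in(0,1]$, where $e^{\ii s\arg}$ is still a non-smooth-at-the-origin phase but not a single-valued power of $U$. The linear scaling in $s$ of the Lipschitz-type constant above makes this expected to go through verbatim, but it is exactly this step that forces one to extract, from the WLOC machinery, a bound that respects the one-parameter family rather than just the single operator $U$.
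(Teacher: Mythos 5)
The homotopy you propose does not stay within the Fredholm operators, and the failure is structural rather than technical. For $s\in(0,1)$ the phase $u_s(x)=\ee^{\ii s\arg(x)}$, with $\arg$ taking values in $(-\pi,\pi]$, has a jump discontinuity of magnitude $2|\sin(\pi s)|>0$ across the negative real axis; the jump is invisible at $s=0,1$ (since $\ee^{\pm\ii\pi}=-1$) but is present throughout the open interval. Consequently your displayed Lipschitz-type bound is false for $s\in(0,1)$: taking $x=(-R,1)$ and $y=(-R,-1)$, one has $|U_s(x)-U_s(y)|\to 2|\sin(\pi s)|$ as $R\to\infty$, while the proposed right-hand side $s\|x-y\|/\min(\|x\|,\|y\|)$ is of order $2s/R$ and tends to zero. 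Thus $[Q,U_s]$ is in general not compact for $s\in(0,1)$ and $\mathbb{Q}U_s$ is not Fredholm along your path.

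This is not a step that a more careful estimate can rescue. Your argument uses only $Q\in\WLOC$ and nowhere invokes that $Q$ is a spectral projection onto the mobility gap; if a norm-continuous path of unitaries from $U$ to $\Id$ existed with $[Q,U_s]$ compact throughout, the identical reasoning would give $\findex\PP U=0$ for every Fermi projection $P\in\WLOC$, forcing every Chern number to vanish \textemdash contradicting the nontriviality of the IQHE. The vanishing of $\findex\mathbb{Q}U$ must therefore exploit the localization of the eigenstates spanning $\operatorname{ran}Q$. The paper does so: it takes a SULE eigenbasis $\{\psi_n\}$ for $\operatorname{ran}Q$ with localization centers $x_n$, defines a unitary $V$ acting as $V\psi_n=\exp(\ii\arg(x_n\cdot e_1+\ii x_n\cdot e_2))\psi_n$ on $\operatorname{ran}Q$ and as the identity on $\operatorname{ran}Q^\perp$, so that $V$ commutes with $Q$ and $\findex\mathbb{Q}V=0$ trivially, and then proves $(U-V)Q$ is compact using the SULE decay \cref{eq:SULE cond} together with the summability \cref{eq:SULE summability} of the localization centers. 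It is this eigenfunction-level structure, not weak locality of $Q$ alone, that makes the index vanish.
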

	
	The formal proof of \cref{prop:index is zero in mobility gap} is somewhat technical, and we give it in \cref{sec:proof of index zero in mob gap} below. The main idea is that the range of $Q$ is spanned by a basis $\{\psi_n\}_n$ of eigenvectors of $H$ with good localization properties, namely a so-called SULE basis with polynomial decay (this follows from the fact that $\Delta$ is a mobility gap; see \cref{sec:The SULE basis} for a definition).  We define a unitary operator on $\operatorname{ran}Q$ via
	$$ V \psi_n := \exp(\ii\arg(x_n\cdot e_1+\ii x_n\cdot e_2)) \psi_n\, , $$
	where $x_n$ is a suitably chosen \emph{localization center} for $\psi_n$.  Extending $V$ to the whole Hilbert space $\HH$ as $V\psi=\psi$ for $\psi\in \operatorname{ran}Q^\perp$, we find that $V$ commutes with $Q$ and thus that $\findex\mathbb{Q} V = 0$.  For vectors $\psi\in \operatorname{ran}Q$, $V\psi$ is an approximation to $U\psi$ wherein the flux insertion operator is modified to act at the localization centers of the SULE basis.   As we show below, it follows that $(U-V)Q$ is compact.  Thus $\findex \mathbb{Q}U=\findex \mathbb{Q} V=0$.

	\subsection{The edge topological index and the bulk-edge correspondence}
	We now turn to the formulation of a suitable regularized index for an edge system with a bulk mobility gap $\Delta$, and the equality of this index with the bulk index. To motivate our proposed index, it is useful first to  consider a system with a bulk \emph{spectral gap} $\Delta$.  In that case, the following edge index was formulated in \cite[eq.\ (2.4)]{Fonseca2020} in both the IQHE and the Fu-Kane-Mele $\ZZ_2$ case. Let $\hat{H}$ be an edge Hamiltonian that has a bulk-spectral-gap on $\Delta$ as in \cref{def:edge system with bulk gap}. Then its associated index is
	\begin{align}\label{eq:edge index in spectral gap regime}\hat{\calN}_{(2)}(\hat{H},\Delta) := \findex_{(2),\HHE} \mathbb{W}_1 g(\HE)   \, ,\end{align}
	where $g:\RR\to[0,1]$ is any smooth function with $g(t)\equiv 1$ for $t<\Delta$ and $g(t)\equiv 0$ for $t>\Delta$, and so, in particular, \begin{align}\label{eq:property of g}\supp(g^2-g)\subseteq\Delta\end{align}and where we use the abbreviated "winding" notation \begin{align}\WW_1 A := \mathbb{\Lambda}_1 \exp(-2\pi\ii A)\equiv \Lambda_1\exp\left(-2\pi\ii A\right)\Lambda_1+\Id-\Lambda_1\,.\end{align} The subscript $\HHE$ on $\findex_{(2)}$ is placed to emphasize the the dimensions of kernels are now calculated on the edge Hilbert space $\HHE$. 
	
	$\hat{\calN}$ was first defined by \cite{SBKR_2000}, where it is connected with the edge Hall conductivity in the spectral gap regime. Heuristically, this index counts the number of states pushed across the fiducial line $x_1=0$ by the unitary evolution $\exp(-2\pi \ii g(\hat{H})) $ (or the parity of this number, in the $\ZZ_2$ case). If $H$ is a bulk Hamiltonian with a spectral gap on $\Delta$, $g(H)$ is a spectral projection and $\exp(-2\pi\ii g(H))=\Id$. Moreover, if $\HE$ is an edge Hamiltonian with a bulk-spectral-gap on $\Delta$ as in \cref{def:edge system with bulk gap}, ${\exp(-2\pi \ii g(\hat{H}))-\Id \in \LOC{2}}$, from which it follows \cite[Prop A.4]{Fonseca2020} that $\WW_1 g(\HE)\in\calF$ and hence $\hat{\calN}$ is well defined. 
	
	Back to the case at hand, let now $\HE$ be an edge Hamiltonian with a bulk-mobility-gap on $\Delta$ as in
	\cref{def:edge system with bulk mobility gap}. That implies there exists some bulk Hamiltonian $K$ with a bulk mobility gap on $\Delta$ as in \cref{def:deterministic insulator} and that $\HE-\Ad_\iota K \in \LOC{2}\,.$ In this scenario, we no longer have $\exp(-2\pi\ii g(K))=\Id$, since $K$ possesses eigenvalues $\lambda\in\Delta$ for which $g(\lambda)\neq0,1$, and furthermore, $\exp(-2\pi \ii g(\hat{H}))-\Id$ will not decay into the bulk since it will evolve localized bulk states infinitely far into away from the boundary. 
		
	To formulate a mobility-gap regularized edge index, we proceed with the idea that for the winding, we only want to count ``edge states'' which do not overlap strongly with such localized eigenstates of the bulk. The idea then is to define the regulator \begin{align}\label{eq:definition of the regulator} \hat{R} := \Ad_\iota \chi_{\Delta^c}(K)\,. \end{align} Now we replace the operator $ g(\hat{H})$ by $\hat{R} g(\hat{H}) \hat{R}$, and define the \emph{mobility-gap regularized edge index} as
	\begin{align}\hat{\calN}_{(2)}(\HE,\Delta) \ := \ \findex_{(2),\HHE} \WW_1 \hat{R} g(\HE) \hat{R} \,. \label{eq:edge index} \end{align} 
	
	Below in \cref{sec:edge index well-defined} we will show this index does not depend on the choice of the bulk Hamiltonian $K$ defining the regulator $\hat{R}$, and thus is intrinsic to the edge Hamiltonian $\hat{H}$. As explained above, for spectrally-gapped systems this choice of regulator reduces to $\hat{R}=\Id$. The idea to define an edge index which apparently depends on a bulk object has precedence in the mobility gap regime for the IQHE in \cite{EGS_2005} (although there there was also an independent definition without a bulk Hamiltonian using time-averaging). The definition is ultimately justified using the fact the index does not depend on the choice of $K$, which will be established in \cref{sec:edge index well-defined}.

	
	A main result in this paper is
	\begin{thm}[Bulk-edge correspondence]\label{thm:bec}
		Let $H$ be a mobility gapped insulator on $\Delta$ as in \cref{def:deterministic insulator} and $\HE$ be an edge Hamiltonian with bulk-mobility-gap on $\Delta$ as in \cref{def:edge system with bulk mobility gap}, Furthermore, assume that $H$ and $\HE$ are compatible in the sense that $$ \HE-\Ad_\iota H \in \LOC{2}\,. $$ 
		Then bulk index of $H$ and edge index of $\HE$ agree, namely, $$ \calN_{(2)}(H,\Delta) = \hat{\calN}_{(2)}(\HE,\Delta)\,. $$
	\end{thm}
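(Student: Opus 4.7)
The plan is to rewrite both indices in the common winding form $\findex_{(2)} \WW_1 A$ for 2-quasi-projections $A \in \qProj{2}(\WLOC)$, and then connect them by a norm-continuous path of 2-quasi-projections along which $\WW_1 A$ remains Fredholm. Stability of the Fredholm index under norm-continuous perturbations then forces $\calN_{(2)}(H,\Delta) = \hat{\calN}_{(2)}(\HE,\Delta)$. The essential technical input is the mobility-gap analogue of the Fonseca--Shapiro Fredholm-homotopy machinery---namely \cref{thm:Fredholm homotopies in the mobility gap regime}---which replaces the holomorphic functional calculus (unavailable on $\WLOC$) with polynomial approximations relying only on the $*$-algebra structure.

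\textbf{Reductions and bulk reformulation.} I begin by setting $K = H$ in the regulator $\hat R = \Ad_\iota \chi_{\Delta^c}(H)$, permitted by the independence of $\hat{\calN}_{(2)}$ from the choice of $K$ to be proven in \cref{sec:edge index well-defined}; by \cref{thm:stability of index}, I am simultaneously free to choose $\mu \in \Delta$ for the bulk index. Next I show
$$\findex_{(2)} \PP U \;=\; \findex_{(2)} \WW_1 (\Lambda_2 P \Lambda_2),$$
where $\Lambda_2 P \Lambda_2 \in \qProj{2}(\WLOC)$: it is self-adjoint, weakly local, and $(\Lambda_2 P \Lambda_2)^2 - \Lambda_2 P \Lambda_2 = -\Lambda_2 P \Lambda_2^\perp P \Lambda_2$ has matrix elements decaying in $|x_2|$ because the intermediate $\Lambda_2^\perp$ forces both $x$ and $y$ to lie close to the horizontal axis. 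The identity follows from an intermediate homotopy $A_t := B_t P B_t$ with $B_t := \Id - t \Lambda_2^\perp$, which smoothly connects $A_0 = P$ (where $\exp(-2\pi\ii P) = \Id$ allows elementary algebra to recover $\findex_{(2)} \PP U$ from the winding index) to $A_1 = \Lambda_2 P \Lambda_2$; each $A_t$ is a 2-quasi-projection with uniform weakly-local bounds, and \cref{thm:Fredholm homotopies in the mobility gap regime} yields Fredholm continuity.

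\textbf{Main homotopy from edge to bulk.} With both indices expressed as $\findex_{(2)} \WW_1 A$, I construct a norm-continuous path of 2-quasi-projections on $\HH$ connecting $\iota \hat R g(\HE) \hat R \iota^*$ (extended by zero to the lower half plane, which leaves the winding index unchanged) and $\Lambda_2 P \Lambda_2$. This interpolation proceeds in three stages. First, I replace $g(\HE)$ by $\Ad_\iota g(H)$ on the upper half plane, using the compatibility $\HE - \Ad_\iota H \in \LOC{2}$ together with standard smooth functional calculus estimates to ensure the discrepancy lies in $\LOC{2}$. Second, I remove the regulator by interpolating $\chi_{\Delta^c}(H)$ to $\Id$; the discarded part $\chi_\Delta(H)$ is weakly local, and after truncation by $\Lambda_2$ the resulting commutators remain confined in direction 2. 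Third, I replace $g(H)$ by the sharp cutoff $P = \chi_{(-\infty,\mu)}(H)$, the difference being supported in $\Delta$ and weakly local by the mobility gap hypothesis. At each stage the $*$-algebra closure of $\WLOC$ and the fact that $\WLOCC{2}$ is a two-sided ideal in $\WLOC$ keep $A_t \in \qProj{2}(\WLOC)$, and \cref{thm:Fredholm homotopies in the mobility gap regime} preserves Fredholmness of $\WW_1 A_t$.

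\textbf{Main obstacle.} The hardest point is verifying that each interpolation stage preserves both the 2-quasi-projection property with uniform weakly-local bounds and, in the $\ZZ_2$ case, the $\Theta$-odd structure of $\WW_1 A_t$, so that $\findex_2$ remains well-defined along the path. The fundamental reason the spectral-gap argument of \cite{Fonseca2020} cannot be imported unchanged is that $\WLOC$, in contrast to $\LOC{}$, is not known to be closed under analytic functional calculus; the polynomial-based workaround of \cref{thm:Fredholm homotopies in the mobility gap regime} is precisely what makes the mobility-gap analysis possible.
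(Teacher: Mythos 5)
Your high-level strategy---recast both indices as winding indices of 2-quasi-projections and connect them by Fredholm-preserving homotopies using \cref{thm:Fredholm homotopies in the mobility gap regime}---matches the paper's architecture, and your stage-2/stage-3 interpolations on the edge side are in the right spirit (the paper uses \cref{lem:homotopy tool} for essentially those steps). But the opening reduction contains a fatal error that cannot be patched as stated.

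You claim to prove $\findex_{(2)} \PP U = \findex_{(2)} \WW_1(\Lambda_2 P \Lambda_2)$ via the path $A_t := B_t P B_t$, $B_t := \Id - t\Lambda_2^\perp$, from $A_0 = P$ to $A_1 = \Lambda_2 P \Lambda_2$. This fails on two independent counts. First, the path leaves $\qProj{2}$ at every interior time: a direct computation gives
$$A_t^2 - A_t = -(2t - t^2)\, B_t\, P \Lambda_2^\perp P\, B_t\,,$$
and $P\Lambda_2^\perp P$ is \emph{not} confined in direction $2$---its diagonal matrix elements persist as $x_2 \to -\infty$---so $A_t^2 - A_t \notin \WLOCC{2}$ for $t\in(0,1)$, and \cref{thm:Fredholm homotopies in the mobility gap regime} does not apply along the path. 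Second, and more decisively, the endpoint $t=0$ gives $\WW_1 A_0 = \bbLambda_1 \exp(-2\pi\ii P) = \bbLambda_1 \Id = \Id$, whose index is $0$ outright; there is no ``elementary algebra'' that recovers the (generally nonzero) flux-insertion index $\findex_{(2)}\PP U$ from the identity operator. Were your homotopy otherwise valid, it would prove $\findex_{(2)}\WW_1 \Lambda_2 P \Lambda_2 = 0$ unconditionally, contradicting the existence of nontrivial systems. The equality $\findex_{(2)}\PP U = \findex_{(2)}\WW_1 P\Lambda_2 P$ (Kitaev's form) is precisely the hard content here; the paper establishes it in \cref{lem:flux insertion connected to Kitaev index} by a genuinely different argument: deforming the flux-insertion unitary itself into cone-supported gauge transformations $U_{a,R}$, $U_{-a,L}^*$, introducing the vertical cut $\bbLambda_1$, raising $P$ to the exponent via a compactness estimate on $P\ee^{\ii\xi}P - P\ee^{\ii P\xi P}P$, and finally ``flattening the cones.'' That multi-step route cannot be replaced by a one-parameter deformation of the projection, because the transition from a unitary-conjugation index to a winding index requires the intermediate gauge-field geometry. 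You should reproduce (or replace) that argument before the rest of your plan can go through. In the $\ZZ_2$ case there is an additional point you flag but do not resolve: the paper notes that the spectral-gap proof of \cite{Fonseca2020} relied on there being exactly two path components of TRI insulators, which is unknown in the mobility gap regime---hence the need for the direct homotopy.
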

	This result is new in the context of mobility-gapped TRI $\ZZ_2$ topological insulators. In the IQHE it was proven in \cite[Theorem 1]{EGS_2005} and our proof provides an alternative route, although our edge index in the IQHE is defined differently; see our comments below in \cref{subsec:equivalence of IQHE edge indices}.
	
	\subsection{The edge index is well-defined}\label{sec:edge index well-defined}
	While there are proofs that the bulk index remains well-defined in the mobility gap \cite{Bellissard_1994,Aizenman_Graf_1998}, the edge index has not been similarly studied in the generality considered here.  Unfortunately the results of \cite[Prop A3, A4]{Fonseca2020} do not apply in this case, but we will prove that
	\begin{thm}\label{thm:Fredholm homotopies in the mobility gap regime}
		If $A\in\qProj{2}$ then $\WW_1 A$ is Fredholm.
		Furthermore, if $[0,1]\ni t\mapsto A(t)\in\qProj{2}$ is a norm-continuous family of such operators then $[0,1]\ni t\mapsto\findex\WW_1 A$ is locally-constant.
	\end{thm}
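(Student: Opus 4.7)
The plan is to reduce Fredholmness of $\WW_1 A = \mathbb{\Lambda}_1\exp(-2\pi\ii A)$ to compactness of the commutator $[\Lambda_1,\exp(-2\pi\ii A)]$. Once that commutator is compact, $\mathbb{\Lambda}_1\exp(2\pi\ii A)$ provides a parametrix by the standard calculation (expand $\mathbb{\Lambda}_1\exp(-2\pi\ii A)\cdot\mathbb{\Lambda}_1\exp(2\pi\ii A)$ using $\exp(-2\pi\ii A)\exp(2\pi\ii A)=\Id$ and absorb the resulting commutators into the compact ideal), and Atkinson's theorem then yields Fredholmness. The essential difficulty, flagged in the ``Open Questions'' paragraph above, is that $\WLOC$ is not known to be closed under the holomorphic functional calculus, so one cannot directly invoke the identity $\exp(-2\pi\ii z)-1 = (z^2-z)g(z)$ (with $g$ entire) to place $\exp(-2\pi\ii A)-\Id$ in $\WLOCC{2}$.

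To circumvent this, I would use polynomial approximation. The function $g(z):=(\exp(-2\pi\ii z)-1)/(z^2-z)$ is entire, since the numerator has simple zeros at $0$ and $1$, so Taylor truncation produces polynomials $g_n$ converging to $g$ uniformly on any disk containing the (compact) spectrum $\sigma(A)$. Setting $p_n(z):=(z^2-z)g_n(z)$, one has $p_n(A)\to \exp(-2\pi\ii A)-\Id$ in operator norm. Because $A\in\qProj{2}$ gives $A^2-A\in\WLOCC{2}$ and $\WLOCC{2}$ is a two-sided ideal in $\WLOC$, each $p_n(A)=(A^2-A)g_n(A)$ lies in $\WLOCC{2}$. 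Applying \cref{eq:derivative of weakly local operator is confined} together with the ideal property, $\partial_1 p_n(A) = -\ii[\Lambda_1,p_n(A)]\in \WLOCC{1}\cap\WLOCC{2}\subseteq \mathcal{J}_1$, which is in particular compact. Since the commutator with $\Lambda_1$ is norm-continuous and the compact operators form a norm-closed ideal, letting $n\to\infty$ yields compactness of $[\Lambda_1,\exp(-2\pi\ii A)]$, as required.

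For the homotopy statement, norm-continuity of $t\mapsto A(t)$ transfers to norm-continuity of $t\mapsto \exp(-2\pi\ii A(t))$ by the continuous functional calculus (applied to the entire function $z\mapsto\exp(-2\pi\ii z)$ on a compact set containing the uniformly bounded spectra of the $A(t)$), and hence to norm-continuity of $t\mapsto \WW_1 A(t)$ as a path in $\calF$. Local constancy of $\findex_{(2)}\WW_1 A(t)$ then follows from the standard continuity of the Fredholm (or $\ZZ_2$) index in operator norm.

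The main obstacle is the polynomial-approximation step itself: the argument trades the unavailable topological closure of $\WLOCC{2}$ under the holomorphic functional calculus for the easy norm-closure of the compact operators, while using only the algebraic ideal property of $\WLOCC{2}$ inside $\WLOC$. The key point is that the vanishing of $p_n$ at both $0$ and $1$ is what forces a factorization through $A^2-A$, placing $p_n(A)$ into the ideal $\WLOCC{2}$; without such vanishing one would only have $p_n(A)\in\WLOC$, from which compactness of $[\Lambda_1,p_n(A)]$ could not be extracted.
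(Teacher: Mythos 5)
Your proposal is correct, and it takes a genuinely different (and arguably cleaner) route from the paper's. Both proofs hinge on the same algebraic observation: a polynomial vanishing at $0$ and $1$ factors as $(z^2-z)\cdot(\text{polynomial})$, so evaluating at $A\in\qProj{2}$ lands in the ideal $\WLOCC{2}$ even though $\WLOC$ is not known to be stable under the holomorphic functional calculus. Where you diverge is in how the limit $n\to\infty$ is taken. The paper approximates the \emph{operator} $\WW_1 A$: it constructs auxiliary Fredholm operators $B_N=\mathbb{\Lambda}_1 f_N(A)$ (with $f_N(0)=f_N(1)=1$), exhibits parametrices $C_N=\mathbb{\Lambda}_1\bigl((f_N(A))^{-1}\bigr)$, verifies that $\|C_N\|$ is bounded uniformly (via a Neumann series estimate for $(f_N(A))^{-1}$), and invokes Dieudonné's lemma, which states that the ball of radius $\|C_N\|^{-1}$ around $B_N$ lies in the Fredholm operators. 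You instead approximate the \emph{commutator}: you show $[\Lambda_1,p_n(A)]\in\WLOCC{1}\cap\WLOCC{2}\subseteq\mathcal{J}_1$ is compact, and since the compact operators form a norm-closed ideal while $[\Lambda_1,p_n(A)]\to[\Lambda_1,\exp(-2\pi\ii A)]$ in norm, the limit commutator is compact; then $\mathbb{\Lambda}_1\exp(2\pi\ii A)$ is an explicit parametrix of $\WW_1 A$. Your route avoids Dieudonné's lemma and the parametrix norm estimates entirely, trading the (unavailable) closure of $\WLOCC{2}$ under limits for the free closure of the compacts, and produces an explicit parametrix for the limiting operator rather than only for the approximants. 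The paper's route, by contrast, gives explicit uniform control on the degree $N$ needed, but that quantitative information is not exploited elsewhere. Your treatment of the homotopy part (norm-continuity of $t\mapsto\WW_1 A(t)$ via continuous functional calculus plus norm-continuity of the Fredholm and $\ZZ_2$ indices) is standard and matches what the paper leaves implicit.
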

	The proof of \cref{thm:Fredholm homotopies in the mobility gap regime} can be found in \cref{sec:proof of mobility gapped homotopies}.  Here we use it to derive our main result, \cref{thm:bec} as well as the fact that the regularized edge index is well-defined:
	\begin{prop}
		With $\HE$ an edge Hamiltonian with a bulk mobility gap on $\Delta$ as in the preceding section, and with the choice of regulator $\hat{R}$ as in \cref{eq:definition of the regulator}, we have \begin{align}\WW_1 \hat{R} g(\HE) \hat{R}\in\calF(\HHE)\end{align} and \begin{align}\findex_{(2)} \WW_1 \hat{R} g(\HE) \hat{R} = \findex_{(2)} \WW_1 \hat{R'} g(\HE) \hat{R'}\end{align} where $$\hat{R'}\equiv\Ad_\iota\chi_{\tilde{\Delta}^c}(K')$$ is a regulator associated with any other choice $K'$ of bulk mobility gapped insulator on $\Delta$ obeying \cref{def:deterministic insulator} such that $$\HE-\Ad_\iota K'\in\LOC{2}$$ and $\tilde{\Delta}\subseteq\Delta$ is any sub-interval of the mobility gap for which $\supp(g^2-g)\subseteq\tilde{\Delta}$.
		
		Consequently, $\hat{\calN}_{(2)}$ is well-defined and independent of the regularization.
	\end{prop}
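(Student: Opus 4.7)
The plan is to reduce both statements to \cref{thm:Fredholm homotopies in the mobility gap regime}. For the Fredholmness of $\WW_1 \hat{R} g(\HE) \hat{R}$, it suffices to verify that $A := \hat{R} g(\HE) \hat{R} \in \qProj{2}(\WLOC)$. For the equality of indices, it suffices to exhibit a norm-continuous path in $\qProj{2}(\WLOC)$ from $A$ to $A' := \hat{R}' g(\HE) \hat{R}'$ and invoke the homotopy invariance delivered by the same theorem.

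To check $A \in \qProj{2}(\WLOC)$, self-adjointness is immediate. For $A \in \WLOC$ I would use that $\chi_{\tilde\Delta^c} \in B_1(\Delta)$ together with \cref{def:deterministic insulator} to obtain $\chi_{\tilde\Delta^c}(K) \in \WLOC(\HH)$ (hence $\hat{R} \in \WLOC(\HHE)$), and a Helffer--Sj\"ostrand representation combined with $\HE - \Ad_\iota K \in \LOC{2}$ to obtain $g(\HE) - \Ad_\iota g(K) \in \LOC{2}$ (hence $g(\HE) \in \WLOC$, since $\Ad_\iota g(K) \in \WLOC$ and $\LOC{2} \subseteq \WLOC$). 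The central computation, using $\hat{R}^2 = \hat{R}$, is
\begin{align*}
A^2 - A \;=\; \hat{R}\,(g(\HE)^2 - g(\HE))\,\hat{R} \;-\; \hat{R}\, g(\HE)\,\hat{R}^\perp\, g(\HE)\,\hat{R}.
\end{align*}
Both summands should land in $\WLOCC{2}$: the first because $\supp(g^2-g) \subseteq \tilde\Delta$, so after substituting $(g^2-g)(\HE) = \Ad_\iota(g^2-g)(K) + \LOC{2}$, the disjoint-support identity $\chi_{\tilde\Delta^c}(K)(g^2-g)(K) = 0$ combined with $[\Lambda_2, \chi_{\tilde\Delta^c}(K)] \in \WLOCC{2}$ (from \cref{eq:derivative of weakly local operator is confined}) kills the leading contribution; the second because $g$ coincides with a spectral projection on $\tilde\Delta^c$, so $g(K)\chi_{\tilde\Delta^c}(K)$ is a spectral projection of $K$ orthogonal to $\chi_{\tilde\Delta}(K)$ (the bulk preimage of $\hat{R}^\perp$), and the same commutator manipulations yield a $\WLOCC{2}$ remainder.

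For the independence from the choice of bulk reference, I would attempt the straight-line homotopy $A_s := (1-s) A + s A'$ within $\qProj{2}(\WLOC)$. Membership in $\WLOC$ and norm continuity are automatic, and the expansion
\begin{align*}
A_s^2 - A_s = (1-s)^2(A^2-A) + s^2((A')^2-A') + s(1-s)(AA' + A'A - A - A')
\end{align*}
reduces the quasi-projection property to the cross term. I would handle this by applying the same Fredholmness reductions to identify both $A$ and $A'$, modulo $\WLOCC{2}$, with $\Ad_\iota P$ and $\Ad_\iota P'$ where $P := \chi_{(-\infty, \inf\tilde\Delta)}(K)$ and $P' := \chi_{(-\infty, \inf\tilde\Delta)}(K')$ (using $\Ad_\iota B \cdot \Ad_\iota C = \Ad_\iota(BC) + \WLOCC{2}$ for $B, C \in \WLOC$, again via $[\Lambda_2, \cdot] \in \WLOCC{2}$). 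The cross term then becomes $\Ad_\iota(PP' + P'P - P - P') + \WLOCC{2}$, and its confinement in direction $2$ reduces to $\Ad_\iota(P - P') \in \WLOCC{2}$, which I would extract from $\Ad_\iota K - \Ad_\iota K' \in \LOC{2}$ via a functional calculus argument on the edge at an energy inside the common mobility gap. Once $A_s \in \qProj{2}(\WLOC)$, \cref{thm:Fredholm homotopies in the mobility gap regime} yields $\findex_{(2)} \WW_1 A = \findex_{(2)} \WW_1 A'$.

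The hardest step throughout is the replacement $f(\HE) - \Ad_\iota f(K) \in \LOC{2}$ for smooth $f$ whose derivative is supported inside the mobility gap $\Delta$: a direct Helffer--Sj\"ostrand argument appeals to resolvent decay at real energies inside $\Delta$, which is precisely where a Combes--Thomas estimate on $\WLOC$ is unavailable (the open problem flagged earlier in the paper). I would try to sidestep this in the spirit of \cref{thm:Fredholm homotopies in the mobility gap regime} by using polynomial approximations of $g$ on the spectrum and relying only on the algebraic $\ast$-algebra and ideal structure of $\WLOC$ and $\WLOCC{2}$, rather than any analytic functional calculus.
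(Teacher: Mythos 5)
The overall strategy you propose matches the paper's: reduce both claims to \cref{thm:Fredholm homotopies in the mobility gap regime} by checking membership in $\qProj{2}(\WLOC)$ and exhibiting a homotopy. However, there are two significant problems.

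First, a factual error: $\hat{R}^2 \neq \hat{R}$. Since $\hat{R} = \iota^\ast \chi_{\Delta^c}(K)\iota$, one has $\hat{R}^2 = \iota^\ast \chi_{\Delta^c}(K)\Lambda_2\chi_{\Delta^c}(K)\iota$, which equals $\hat{R}$ only modulo $\WLOCC{2}$ (using $[\Lambda_2,\chi_{\Delta^c}(K)]\in\WLOCC{2}$). This is a genuine feature the paper emphasizes (``$R\in\qProj{2}$'' is proved, not assumed); your ``central computation'' needs $\cong$ in place of $=$ throughout. After that correction, your decomposition of $A^2-A$ does land in $\WLOCC{2}$ by arguments parallel to the paper's one-step reduction $(RGR)^2-RGR\cong -RGRG^\perp R\cong 0$.

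Second, and more seriously, the $K$-independence argument is missing its key ingredient. Your straight-line homotopy essentially re-derives \cref{lem:homotopy tool}, reducing everything to showing $\Ad_\iota(P-P')\in\WLOCC{2}$ where $P,P'$ are Fermi projections of $K,K'$ at a common cut-off. But you take that cut-off at $\inf\tilde\Delta$ and appeal to an unspecified ``functional calculus argument at an energy inside the common mobility gap.'' The mobility-gap hypothesis gives $\WLOC$ control of $\chi_{(-\infty,E)}(K)$ for all $E\in\Delta$ but \emph{not} of the resolvent at fixed real $E$: \cref{lem:SULE implies Green function decay} only yields resolvent bounds for \emph{almost every} $E$. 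The paper therefore (i) first proves that the index is unchanged when the regulator interval is shrunk from $\Delta$ to $[a,b]\subset\Delta$ (this step uses \cref{lem:flux insertion connected to Kitaev index} and \cref{prop:index is zero in mobility gap} and is nontrivial), and (ii) then chooses $a,b$ in the full-measure set $S\cap S'$ so that the contour-integral representation of $R-R'$ can be estimated in $\WLOCC{2}$. Without step (i) you cannot move the endpoints to good energies; without step (ii) the resolvent factors in the contour integral are not known to be $\WLOC$. Your sketch omits both.

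Finally, your worry that Helffer--Sj\"ostrand for $g(\HE)-\Ad_\iota g(K)\in\LOC{2}$ is obstructed by the weakly-local Combes--Thomas problem is actually misplaced: here $\HE,K\in\LOC{}$ are \emph{local} Hamiltonians, so the standard Combes--Thomas bound applies at $\Im z\neq 0$, and the almost-analytic extension of the smooth $g$ tames the blow-up of the resolvents as $\Im z\to 0$; no real-axis resolvent decay is invoked. The paper's sole need for \cref{lem:SULE implies Green function decay} is the $R-R'$ estimate discussed above, not the $g(\HE)$ estimate. The polynomial-approximation workaround you propose is therefore unnecessary here and would, in any case, not cure the missing choice-of-good-energies step.
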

	\begin{proof}
		Within this proof we will drop the hats from edge operators for convenience (all operators will be edge operators, including $H\equiv \HE$ within this proof). We will also denote $G := g(H)$ for brevity.
		
		To show $\WW_1 R G R\in\calF$, by \cref{thm:Fredholm homotopies in the mobility gap regime}, it suffices to show that \begin{align}\label{eq:RGR is a 2-quasi-projection} RGR \in\qProj{2}\,. \end{align}  By construction, it is clear that, $R$ is self-adjoint and weakly-local. Furthermore, $R\in\qProj{2}$. Indeed, denoting by $\cong$ equivalence up to terms in $\WLOCC{2}$, we have, using $\iota\iota^\ast=\Lambda_2$ and $\iota^\ast\Lambda=\iota^\ast$, $$ R^2 = \iota^\ast \chi_{\Delta^c}(K)\iota\iota^\ast\chi_{\Delta^c}(K)\iota = \iota^\ast \chi_{\Delta^c}(K)\Lambda_2\chi_{\Delta^c}(K)\iota = \iota^\ast \left([\chi_{\Delta^c}(K),\Lambda_2]\chi_{\Delta^c}(K)+\Lambda_2\chi_{\Delta^c}(K)\right)\iota\cong R $$ where the last equality is due to the fact that $$\partial_2\WLOC\subseteq\WLOCC{2}\,.$$ 
		
		Since $R\in\qProj{2}$, we have $$ (RGR)^2-RGR \cong RGRGR-RGR \cong R(GRG-GR)R = -RGRG^\perp R$$ where $G^\perp := \Id-G$ even though $G$ is \emph{not} a projection. Next, we claim $G-\Ad_\iota g(K)\in\WLOCC{2}$. Indeed, we assume $H-\Ad_\iota K \in \LOC{2}$ and then we use the Helffer-Sj\"ostrand formula to relate $g$ of the operators in terms of their resolvents. Hence, $$ RGRG^\perp R \cong R(\Ad_\iota g(K))R(\Id-\Ad_\iota g(K)) R \cong \iota^\ast \chi_{\Delta^c}(K)(g(K)^2-g(K))\iota = 0\,,$$ the last equality is true since it is assumed (in \cref{eq:property of g}) precisely that $\supp(g^2-g)\subseteq\Delta$. This concludes the proof of \cref{eq:RGR is a 2-quasi-projection}.
		
		Before turning to the proof that $\hat{\calN}$ is independent of the choice of $K$, let us show that the index stays the same if $R\equiv\Ad_\iota\chi_{
		\Delta^c}(K)$ is replaced by $\Ad_\iota\chi_{
		[a,b]^c}(K)$ where we assume that if $\supp(g^2-g)\subseteq[c,d]$ then $$ \inf \Delta < a < c < d < b < \sup\Delta\,. $$ We will use the following fact, proven later in \cref{lem:homotopy tool}: $$ A-B\in\WLOCC{2}\Longrightarrow \findex\WW_1 A = \findex\WW_1 B\,. $$ We change only the starting point of the interval, so that $$ R := \Ad_\iota\chi_{
		[a,b]^c}(K)\,;\,R' := \Ad_\iota\chi_{
		[a+\ve,b]^c}(K)  $$ Hence $$ RGR-R'GR' = RG(R-R')+(R-R')GR' $$ and $R-R' = \Ad_\iota\chi_{[a,a+\ve)}(K)$ whereas $\left.g\right|_{[a,a+\ve)}=1$ by assumption. Hence, using again $G\cong\Ad_{\iota}g(K)$, 
		$$ RGR-R'GR'\cong \Ad_\iota \chi_{[a,a+\ve)}(K)\,. $$ Later on in \cref{subsec:BEC} we will see that, using \cref{lem:flux insertion connected to Kitaev index}, for any bulk projection $Q$, one has $$ \findex_{\HHE} \WW_1 \Ad_\iota Q = \findex_{\mathcal{H}} \mathbb{Q}U $$ and we apply this with $Q:=\Ad_\iota \chi_{[a,a+\ve)}(K)$ to learn that, since $Q$ is fully localized (as $[a,a+\ve)\subseteq\Delta$), we have using \cref{prop:index is zero in mobility gap}, that $\findex \mathbb{Q}U = 0\,.$ But since $\left(\WW_1 RGR\right)\left( \WW_1 R'GR'\right)^\ast\cong \WW_1(RGR-R'GR')$ (here $\cong$ means up to compact operators), this concludes the proof that the end points of $\Delta$ in the choice of $R$ do not affect the index. 
		
		Finally, we come to the invariance under the choice of $K$. Let $K'$ be an alternative choice, i.e., such that also $\HE-\Ad_\iota K' \in \LOC{2}$ and $K'$ has a mobility gap on $\Delta$ as in \cref{def:deterministic insulator}. Using \cref{lem:SULE implies Green function decay} further below, we know that there exists two full measures sets $S,S'\subset \Delta$, dependent on $K,K'$ respectively, such that the resolvents of the respective operators with spectral parameter with real part in $S,S'$ respectively are $\WLOC$, \cref{eq:decay of G from SULE}. Pick $a<b\in\Delta$ such that, without loss of generality, $a<c\in S\cap S'$ and $d<b\in S\cap S'$ where $\supp(g^2-g)\subseteq[c,d]\subseteq\Delta$, and define $$R := \Ad_\iota\chi_{[a,b]^c}(K)\,;\, R' := \Ad_\iota\chi_{[a,b]^c}(K')\,.$$ Indeed, by the proof just before, we know that $\hat{\calN}$ does not change under the replacement $\Delta \to [a,b]$ in $R,R'$.

		Using again \cref{lem:homotopy tool}, it suffices to show that $$ R-R' \in \WLOCC{2}\,. $$ 
		
		We note that we also have $K-K'\in\LOC{2}$ and hence \begin{align*}
			R-R' &= \Id-\Ad_\iota\chi_\Delta(K)-\Id+\Ad_\iota\chi_\Delta(K')\\
			&= \Ad_\iota \frac{\ii}{2\pi}\int (K-z\Id)^{-1}(K-K')(K'-z\Id)^{-1}\dif{z}\,.
		\end{align*}
	
		Now if we pick the contour integral to be the rectangular path in $\CC$ with two horizontal legs parallel to $[a,b]$ (above and below it at distance $1$ from it, say) and which passes vertically through $a$ and $b$. Then using \cref{lem:SULE implies Green function decay} we may conclude that since the factor $K-K'$ is $\WLOCC{2}$ and the two resolvents remain $\WLOC$ into the real axis, by the ideal proprety of $\WLOCC{2}$ within $\WLOC$ we are finished.
	\end{proof}
	
	\subsection{The bulk-edge correspondence proof}\label{subsec:BEC}
	In this section we prove our main theorem, \cref{thm:bec}. A crucial first step is the passage from the bulk index defined in \cref{eq:flux insertion index} (we refer to it as the flux insertion index) to Kitaev's index \cite[(131)]{kitaev2006}, the latter being much more suggestive of an edge geometry.

	\begin{lem}\label{lem:flux insertion connected to Kitaev index} Let $P=\chi_{(-\infty,\mu)}(H)$ and $U=\exp(\ii\arg (X_1 + \ii X_2))$, as above.  Then
		\begin{equation}\findex \PP U \ = \ \findex \WW_1 P \Lambda_2 P \label{eq:flux insertion equals kitaev}\end{equation}
		where we use the notation $\WW_1 A := \mathbb{\Lambda}_1 \exp(-2\pi\ii A) $ for any operator $A$. 
	\end{lem}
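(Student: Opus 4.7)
The proof splits into two pieces: checking that $\WW_1 P\Lambda_2 P$ is Fredholm, and establishing the equality of indices.

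For the first, I would show $P\Lambda_2 P \in \qProj{2}$ and invoke \cref{thm:Fredholm homotopies in the mobility gap regime}. The operator is self-adjoint and lies in $\WLOC$ since $P \in \WLOC$ and $\Lambda_2$ is bounded, and the defect can be computed directly:
\[
	(P\Lambda_2 P)^2 - P\Lambda_2 P \;=\; P\Lambda_2[P, \Lambda_2]P,
\]
which lies in $\WLOCC{2}$ since $[P,\Lambda_2] = -\ii\partial_2 P \in \WLOCC{2}$ by \cref{eq:derivative of weakly local operator is confined}, together with the fact that $\WLOCC{2}$ is a two-sided ideal in $\WLOC$. Fredholmness of $\PP U$ is already cited in the paper.

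For the equality of indices, the conceptual picture is that both sides compute the same ``winding of $P$ around the origin'': $\PP U$ captures it through the phase unitary $U = \exp(\ii\arg(X_1+\ii X_2))$ that winds by $2\pi$ around $0$, compressed by $P$, while $\WW_1 P\Lambda_2 P$ captures it through the ``evolution'' $\exp(-2\pi\ii P\Lambda_2 P)$, compressed by $\Lambda_1$. To realize the identification, I would construct an explicit Fredholm homotopy connecting the two. A natural intermediate object is the unitary $V := \exp(-2\pi\ii P\Lambda_2 P)$. Since $[P, P\Lambda_2 P] = 0$, one has $[V, P] = 0$, so $\PP V = V$ and $\findex \PP V = 0$. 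The plan is then to use the logarithmic law of $\findex$ to decompose $\findex \PP U$ into a contribution matching $\findex \WW_1 P\Lambda_2 P$ via a $P$-versus-$\Lambda_1$ duality, plus a trivial $\PP V$ piece. The duality rests on the algebraic relation established in the first part, namely $[P,\Lambda_2] \in \WLOCC{2}$, combined with the spectral identity $\exp(-2\pi\ii P\Lambda_2 P) = P^\perp + \exp(-2\pi\ii P\Lambda_2 P)P$ which lets one swap compressions by $P$ and by $\Lambda_1$ modulo $\WLOCC{1} \cdot \WLOCC{2}$ corrections, hence compact corrections via \cref{eq:confined operators are trace class}.

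The main obstacle is the comparison between $U$ and $V$ in the mobility-gap regime. In the spectral-gap case this step is clean: one would invoke holomorphic functional calculus on $P$ together with the Combes--Thomas estimate. In the mobility-gap regime, where $P$ is only in $\WLOC$, these tools are not available (as the paper's open-question remark emphasizes). All manipulations must therefore proceed through polynomial identities and the $\qProj{2}$-homotopy invariance of \cref{thm:Fredholm homotopies in the mobility gap regime}, together with the ideal structure of $\WLOCC{2}$ inside $\WLOC$. The technical crux is to produce an interpolating family whose intermediate operators remain in $\qProj{2}$ throughout the path, so that Fredholmness—and hence index continuity—is preserved from one endpoint to the other.
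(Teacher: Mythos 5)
Your first half is correct and matches the paper's remark: $P\Lambda_2 P \in \qProj{2}$ follows exactly as you say, and Fredholmness of $\WW_1 P\Lambda_2 P$ is then \cref{thm:Fredholm homotopies in the mobility gap regime}.

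The second half contains a genuine gap. You propose to ``use the logarithmic law of $\findex$'' to decompose $\findex \PP U$ into a piece matching $\findex \WW_1 P\Lambda_2 P$ plus a trivial piece $\findex \PP V$. But the logarithmic law $\findex (AB) = \findex A + \findex B$ does \emph{not} hold for the $\ZZ_2$ index $\findex_2$; indeed the product of two $\Theta$-odd Fredholm operators is generically not $\Theta$-odd, so $\findex_2(AB)$ need not even be defined. The $\ZZ$ case has extra structure (local trace formulas, logarithmic law, and the fact that \cite[Thm.\ 3.1]{Fonseca2020} can be reused verbatim) and is not the obstacle; the entire point of this lemma is to prove the identity in the $\ZZ_2$ mobility-gap regime, where these tools are absent and where one also does not know that TRI insulators form only two path-components. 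Your proposed multiplicative decomposition therefore cannot be the mechanism.

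The second issue is that even granting all stated ingredients, you have not exhibited the interpolating family you correctly identify as ``the technical crux.'' The paper does this concretely and geometrically: it replaces $U$ by a flux insertion $U_{a,R}$ with electric field supported in a cone far to the right (a norm-continuous gauge deformation), shows compressing by $\Lambda_1$ is harmless there, inserts an opposing flux $U_{-a,L}^*$ to obtain $U_{LR}=\ee^{\ii\xi}$, then ``raises $P$ to the exponent'' by showing $\bbLambda_1 \PP \ee^{\ii\xi}$ and $\bbLambda_1 \ee^{\ii P\xi P}$ differ by a compact operator (a term-by-term Schatten estimate via \cref{lem:flux_piercing}), and finally closes the cones ($\nu\to 0$) to reach $\bbLambda_1 \ee^{-2\pi\ii P\Lambda_2 P}$, maintaining Fredholmness at every stage. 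Your unitary $V=\exp(-2\pi\ii P\Lambda_2 P)$ with $\findex\PP V=0$ is a true observation, but without an explicit path in $\calF$ (or $\calF_\Theta$) connecting $\PP U$ to $\WW_1 P\Lambda_2 P$, there is no proof; compactness of differences and $\qProj{2}$-homotopies, invoked abstractly, are not enough.
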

	\begin{rem*}
		The right hand side is \emph{Kitaev's index}.  Note that it is well-defined even in the mobility gap regime, due to the fact that $P\Lambda_2 P \in \qProj{2}$.
	\end{rem*}
	\begin{proof}
		
		In the spectral gap regime, the proof of this result may be found in \cite[Thm. 3.1]{Fonseca2020}. That proof remains valid in the mobility gap regime for the IQHE ($\ZZ$ index). However, in the $\ZZ_2$ mobility gap case, we do not know that there are only two path-connected components of TRI topological insulators. This is known in the spectral gap regime, and is used in \cite[Thm. 3.1]{Fonseca2020} to assert that two index formulas are equal globally provided they agree on one trivial and one non-trivial system. Here instead we proceed by a direct homotopy. 
		
		Let $U_a=\exp(\ii\arg (X_1-a_1 + \ii (X_2-a_2))$ be the flux insertion at position $a \in (\ZZ^2)^*$. This can be norm-continuously deformed into $U_{a, R}$, the flux insertion at $a$ with the corresponding phase $u_{a, R}$ different from $1$ only in the cone with vertex at $a$ and opening to the right with opening angle $\nu$. Inspired by the picture where the flux is inserted dynamically, we say that $U_{a, R}$ is flux insertion with electric field supported in the cone to the right of $a$.  Similarly, we will denote by $U_{a, L}$ the flux insertion at $a$ with electric field supported in the cone to the left of $a$. 
		
		The deformation from $U_a$ to $U_{a, R}$ can be done through a norm continuous path of gauge transformations, all of which satisfy the assumptions of \cref{lem:flux_piercing}. It follows that we obtain a norm continuous path of Fredholm operators interpolating from $\PP U_a$ to $\PP U_{a, R}$. We conclude that the bulk index is given by $\calN = \findex \PP U_{a, R}$.

		\emph{Step 1: Introducing a vertical cut:} Choose $a$ far to the right of the fiducial line $x_1 = 0$. We will show that the operator $\bbLambda_1 \PP U_{a, R}$ is Fredholm, with the same index as $\PP U_{a, R}$. To do this, we simply show that the difference is compact. First note that
		$$\bbLambda_1\PP U_{a,R} \ = \ \Lambda_1 \PP U_{a,R} \Lambda_1 + \Lambda_1^\perp \ = \ \Lambda_1 \PP U_{a,R} \Lambda_1 +
		\Lambda_1^\perp \PP U_{a,R} \Lambda_1^\perp - \Lambda_1^\perp (\PP U_{a,R} -\Id)\Lambda_1^\perp \ .$$
		Here 
		$$ \Lambda_1^\perp (\PP U_{a,R} -\Id)\Lambda_1^\perp \ = \ \Lambda_1^\perp (P [ U_{a,R},P] + P(U_{a,R} - \Id))\Lambda_1^\perp \ = \ \Lambda_1^\perp  P [ U_{a,R},P] \Lambda_1^\perp \ $$
		is compact since  $[U_{a, R}, P]$ is compact by \cref{lem:flux_piercing} and $U_{a, R} - \Id$ is a multiplication operator supported on the cone to the right of $a$, which is disjoint from the support of $\Lambda_1^\perp$. In a similar way, we find that $\Lambda_1\PP U_{a,R} \Lambda_1^\perp$ and $\Lambda_1^\perp \PP U_{a,R} \Lambda_1$ are compact. 
		We conclude that $\bbLambda_1\PP U_{a,R}  - \PP U_{a,R} $ is compact and $\calN = \findex \bbLambda_1 \PP U_{a, R}$.

		\emph{Step 2: Inserting an opposing flux:} We insert an opposite flux $U_{-a, L}^*$ at the point $-a$ far to the left of the fiducial line $x_1 = 0$, with electric field supported in the cone to the left of $-a$. Having fixed $a$, we  henceforth drop it from the notation, so $U_R = U_{a, R}$ and $U_{L} = U_{-a, L}$. We denote the combined flux insertion by  $U_{LR} := U_L^* U_R$. We show that the corresponding operator $\bbLambda_1 \PP U_{LR}$ is still Fredholm with the same index as $\bbLambda_1 \PP U_{a, R}$. Indeed, their difference is compact:
		$$\bbLambda_1 \PP U_{LR} - \bbLambda_1 \PP U_R \ 
		\ = \ \Lambda_1 P (U_L^*-\Id ) U_R P \Lambda_1 \ = \ \Lambda_1 (U_L^* - \Id) P \Lambda_1 + \Lambda_1 [P, U_L^*] P \Lambda_1 \ , $$
		where $\Lambda_1 (U_L^* - \Id) = 0$ because the multiplication operators have disjoint supports and $[P, U_L^*]$ is compact by \cref{lem:flux_piercing}.

		\begin{figure}
			\begin{center}
				\begin{tikzpicture}[scale=0.5]
					\def\a{4};
					\draw[fill=black] (-\a,0) circle (3pt);
					\draw[fill=black] (\a,0) circle (3pt);
					\draw[thick,dotted] (-\a,0) -- (\a,0);
					\node  at (-\a,-.5) {$-a$};
					\node  at (\a,-.5) {$a$};
					\node at (0,{-0.5*\a}) {$\xi(x)=0$};
					\node at (0,{0.5*\a}) {$\xi(x)=2\pi$};
					\draw [blue,thick,dotted,domain=-30:6] plot ({\a+0.75*\a*cos(\x)}, {0.75*\a*sin(\x)});
					\draw [orange,thick,dotted,domain=-6:30] plot ({-\a-0.75*\a*cos(\x)}, {-0.75*\a*sin(\x)});
					\draw [red,thick,dotted,domain=-30:30] plot ({\a+2*\a*cos(\x)}, {2*\a*sin(\x)});
					\draw [thick,domain=0:{2*\a+0.5}] plot ({\a + \x*cos(30)},{\x*sin(30)});
					\draw [thick,domain=0:{2*\a+0.5}] plot ({\a + \x*cos(30)},{-\x*sin(30)});
					\draw [thick,domain=0:{2*\a+0.5}] plot ({-\a - \x*cos(30)},{\x*sin(30)});
					\draw [thick,domain=0:{2*\a+0.5}] plot ({-\a - \x*cos(30)},{-\x*sin(30)});
					\draw [red,thick,dotted,domain=-30:30] plot ({-\a-2*\a*cos(\x)}, {2*\a*sin(\x)});
					\draw [blue,domain=0:{\a}] plot ({\a +\x*cos(6)},{\x*sin(6)});
					\draw [orange,domain=0:{\a}] plot ({-\a -\x*cos(6)},{\x*sin(6)});
					\draw [fill=blue] ({\a +\a*cos(6)},{\a*sin(6)}) circle (3pt);
					\draw [fill=orange] ({-\a -\a*cos(6)},{\a*sin(6)}) circle (3pt);
					\node at ({\a +\a*cos(6) +1},{\a*sin(6)+0.75}) {$\xi(x)=\frac{2\pi}{\nu}\theta$};
					\node at ({-\a -\a*cos(6) -1},{\a*sin(6)+0.75}) {$\xi(x)=\frac{2\pi}{\nu}\phi$};
					\node at ({\a+0.75*\a*cos(-12)+0.5}, {0.75*\a*sin(-12)}) {\textcolor{blue}{$\theta$}};
					\node at ({-\a-0.75*\a*cos(12)-0.5}, {-0.75*\a*sin(12)}) {\textcolor{orange}{$\phi$}};
					\node at ({3*\a+0.5}, 0) {\textcolor{red}{$\nu$}};
					\node at ({-3*\a-0.5}, 0) {\textcolor{red}{$\nu$}};
				\end{tikzpicture}
			\end{center}
			\caption{The double flux insertion function $\xi(x)$. }
			\label{figure 1}
		\end{figure}
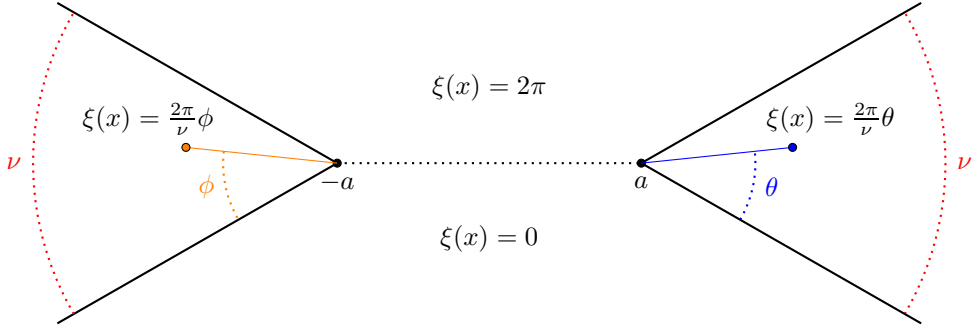

		\emph{Step 3: Raising the Fermi projection to the exponent:} The operator $U_{LR}$ can be written as $U_{LR} = \ee^{\ii \xi}$ where $\xi$ is a multiplication operator that takes the value $2 \pi$ on the upper half plane outside of the cones, the value 0 on the lower half plane outside of the cones, and interpolates continuously on the cones (see \cref{figure 1}). We thus have $\calN = \findex \bbLambda_1 \PP \ee^{\ii \xi} $.

		We now want to `raise $P$ to the exponent,' i.e. we want to relate $\calN$ to the index of
		\begin{align*}
			\bbLambda_1 \PP \ee^{\ii P \xi P} = \bbLambda_1 (P \ee^{\ii P \xi P} P + P^\perp) = \bbLambda_1 \ee^{\ii P \xi P}.
		\end{align*}
		The difference with our starting point is
		\begin{align*}
			\bbLambda_1 \PP \ee^{\ii \xi} - \bbLambda_1 \ee^{\ii P \xi P} = \Lambda_1 ( P \ee^{\ii \xi} P - P \ee^{\ii P \xi P} P ) \Lambda_1.
		\end{align*}
		We will show that this difference is compact. Indeed
		\begin{equation}\label{eq:expanded exponential}
			P \ee^{\ii \xi} P - P \ee^{\ii P \xi P} P \ = \ \sum_{n \geq 2} \frac{\ii^n}{n!} \left( P\xi^n P - (P \xi P)^n \right) 
			\ = \ \sum_{n \geq 2} \frac{\ii^n}{n!} \sum_{\substack{\{\sharp_i\} : \sharp_i \in \{1, \perp\} \\ \text{at least one } \perp} }   P \xi P^{\sharp_1} \xi \cdots \xi P^{\sharp_{n-1}} \xi P \ .
		\end{equation}

		Each of these terms contains at least one factor $P \xi P^\perp$, which we claim is compact. To see this, note that
		$[P, \xi] = P \xi P^\perp - P^\perp \xi P $, so it is sufficient to show that $[P, \xi]$ is compact. This follows from \cref{lem:flux_piercing} since  we may choose $\xi$ to satisfy $\norm{\xi}_\infty=2\pi$ and
		$$|\xi(x+b)-\xi(x)| \ \le \ \frac{1}{\nu} \frac{\norm{b}}{\norm{x} -\norm{a}} \ $$
		for $\norm{x}>\norm{a}$.  Thus the right hand side of \cref{eq:expanded exponential} is a norm-convergent sum of compact operators, and thus compact.  We conclude that $\calN=\findex \bbLambda_1 \ee^{\ii P \xi P}$.

		\emph{Step 4: Flattening the curve:} Finally we want to transform the multiplication operator $\xi$ to (a multiple of) the half-space projector $-2 \pi \Lambda_2$. This can be done by continuously closing the cones, taking $\nu\rightarrow 0$, which yields a norm continuous deformation from $\bbLambda_1 \ee^{\ii P \xi P}$ to $\bbLambda_1 \ee^{2 \pi \ii P \Lambda_2 P }$. As long as the cones have not closed completely, the above arguments show that the interpolating operators are Fredholm. The endpoint of the interpolation is shown to be Fredholm in \cref{thm:Fredholm homotopies in the mobility gap regime}. We conclude that
		$$
		\calN \ = \ \findex \bbLambda_1 \ee^{-\ii P \Lambda_2 P} \ . \qedhere
		$$
	\end{proof}

	To complete the proof of \cref{thm:bec}, we will need the following basic lemma:
	\begin{lem}\label{lem:homotopy tool}
		If $A,B\in\qProj{2}$ such that $A-B\in\WLOCC{2}$ then $\findex \mathbb{W}_1 A = \findex \mathbb{W}_1 B$.	\end{lem}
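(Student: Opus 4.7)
The plan is to reduce the statement to a direct application of \cref{thm:Fredholm homotopies in the mobility gap regime} via the straight-line homotopy between $A$ and $B$. Define
\begin{equation*}
  A(t) \ := \ (1-t) A + t B \ = \ A + t C, \qquad C \ := \ B - A \in \WLOCC{2}, \quad t \in [0,1].
\end{equation*}
Then $A(0)=A$, $A(1)=B$, and $t \mapsto A(t)$ is clearly norm-continuous (indeed, affine).

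The main (and essentially only) computation is to verify that $A(t) \in \qProj{2}$ for every $t \in [0,1]$. Self-adjointness is immediate since $A(t)$ is a real linear combination of self-adjoint operators, and $A(t) \in \WLOC$ because $\WLOC$ is a vector space containing both $A$ and $B$. For the defect, I would expand
\begin{equation*}
  A(t)^2 - A(t) \ = \ (A^2 - A) \ + \ t\bigl( A C + C A - C \bigr) \ + \ t^2\, C^2 .
\end{equation*}
The first summand belongs to $\WLOCC{2}$ since $A \in \qProj{2}$. The remaining summands belong to $\WLOCC{2}$ by the two-sided ideal property of $\WLOCC{2}$ inside $\WLOC$ (using $C \in \WLOCC{2}$ and $A \in \WLOC$, and that $C^2 \in \WLOCC{2} \cdot \WLOCC{2} \subseteq \WLOCC{2}$). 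Hence $A(t)^2 - A(t) \in \WLOCC{2}$ for each $t$, so $A(t) \in \qProj{2}$.

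Having established a norm-continuous path $[0,1] \ni t \mapsto A(t) \in \qProj{2}$, \cref{thm:Fredholm homotopies in the mobility gap regime} guarantees that $\WW_1 A(t)$ is Fredholm for every $t$ and that $t \mapsto \findex \WW_1 A(t)$ is locally constant. Since $[0,1]$ is connected, the function is globally constant, yielding $\findex \WW_1 A = \findex \WW_1 A(0) = \findex \WW_1 A(1) = \findex \WW_1 B$. I do not anticipate any serious obstacle: the only nontrivial point is that the linear interpolation remains in $\qProj{2}$, which is precisely what the ideal structure of $\WLOCC{2}$ buys us, and is exactly the reason \cref{thm:Fredholm homotopies in the mobility gap regime} was set up with this class.
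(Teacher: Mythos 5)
Your proof is essentially identical to the paper's: both use the straight-line homotopy between $A$ and $B$, verify that the quadratic defect stays in $\WLOCC{2}$ by expanding and invoking the ideal property, and then apply \cref{thm:Fredholm homotopies in the mobility gap regime}. You are slightly more explicit about which algebraic facts are used at each step, but the argument is the same.
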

	\begin{proof}
		Consider the homotopy $ [0,1]\ni t \mapsto  t(A-B)+B$. We have $$ (t(A-B)+B)^2-t(A-B)-B =  t^2(A-B)^2+t((A-B)B+B(A-B))+B^2-B-t(A-B) $$ so that at every point along the homotopy $t(A-B)+B\in\qProj{2}$.  The result follows from \cref{thm:Fredholm homotopies in the mobility gap regime}.
	\end{proof}
	Using this lemma, we have 
	$$ \findex \WW_1 P \Lambda_2 P = \findex \WW_1 \Lambda_2 P  \Lambda_2\,. $$
	Indeed, it is sufficient to note that $P\Lambda_2 P, \Lambda_2 P \Lambda_2 \in \qProj{2}$ and that $$ P\Lambda_2 P- \Lambda_2 P \Lambda_2 \simeq \Lambda_2[P,\Lambda_2]P^\perp\in\WLOCC{2}\,. $$
	
	Next, note that by definition $Q g(H) Q = P$ if we take the Fermi energy at the bottom of $\Delta$, and then we can use invariance with respect to the Fermi energy, \cref{thm:stability of index}, to move to it any other point of $\Delta$, so that we may write $$\findex \WW_1 \Lambda_2 P  \Lambda_2 = \findex \WW_1 \Lambda_2  Q g(H) Q \Lambda_2 \,.$$
	Also $$\findex \WW_1 \Lambda_2  Q g(H) Q \Lambda_2 = \findex \WW_1 \Lambda_2  Q g(\Lambda_2 H\Lambda_2) Q \Lambda_2. $$  This follows from the above lemma, thanks to the fact that $\Lambda_2  Q g(H) Q \Lambda_2,   \Lambda_2  Q g(\Lambda_2 H\Lambda_2) Q \Lambda_2 \in \qProj{2}$ and that $$ \Lambda_2  Q g(H) Q \Lambda_2 -   \Lambda_2  Q g(\Lambda_2 H\Lambda_2) Q \Lambda_2 \in\WLOCC{2}\,. $$ For a proof of this latter fact, see \cite{Elbau_Graf_2002} or \cite[Rem. A9]{Fonseca2020}.
	
	Finally, we have, for Dirichlet boundary conditions (with $\HE = \ad_\iota H $), $$ \ker_\HHB \WW_1 \pm \Lambda_2  Q g(\Lambda_2 H\Lambda_2) Q \Lambda_2 \cong \ker_{\HHE} \WW_1\left( \pm   \hQ g(\HE) \hQ\right) $$ and the transition to any boundary conditions is done by a further interpolation (as in \cite[Prop. A10]{Fonseca2020}). This concludes the proof of \cref{thm:bec}, i.e., that $ \calN = \hat{\calN} $. Here we have shown equivalence with the particular choice of $\hat{R} = \hat{Q}$; in \cref{lem:edge hamiltonian compatible with bulk hamiltonian obeys edge-mob-gap def} we saw that this choice is appropriate according to the definition of the edge index.

	\section{The proof of \cref{thm:Fredholm homotopies in the mobility gap regime} on mobility gap homotopies}\label{sec:proof of mobility gapped homotopies}
	Here we shall prove our main tool \cref{thm:Fredholm homotopies in the mobility gap regime}.
	For any $N\in\NN$, let $p_N$ be the polynomial approximation of $\alpha\mapsto\exp(-2\pi\ii \alpha)=:p_\infty(\alpha)$ up to order $N$: $$ p_N(\alpha) \ = \ \sum_{n=0}^N \frac{1}{n!}(-2\pi\ii\alpha)^n\,. $$ and define $$f_N(\alpha) \ := \  p_N(\alpha)-(p_N(1)-1)\alpha \ =: \ 1+\sum_{n=1}^N \varphi_n \alpha^n\,.$$ We observe that $f_N$ is also a polynomial, and that $f_N(0)=f_N(1)=1$. Moreover, $\lim_N f_N = p_\infty $, uniformly as functions on $\RR$.

	\begin{lem}\label{lem:polynomial fredholm}
		If $A\in\qProj{2}$ is self-adjoint then there is some $N_A\in\NN$ such that for all $N\geq N_A$, $$ \mathbb{\Lambda}_1 f_N(A) \ =: \ B_N $$ is a Fredholm operator.
	\end{lem}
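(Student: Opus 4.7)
The plan is to reduce the Fredholmness of $B_N$ to two intermediate facts: (i) for all $N$ sufficiently large, $f_N(A)$ is invertible as a bounded operator on $\HH$, and (ii) the commutator $[\Lambda_1, f_N(A)]$ is compact. Granted (i) and (ii), set $G_N := f_N(A)^{-1}$ and verify, by the Atkinson-style calculation sketched just after \cref{eq:exponential of projection}, that $\mathbb{\Lambda}_1 G_N$ is a parametrix for $B_N$: both products $B_N\cdot \mathbb{\Lambda}_1 G_N$ and $\mathbb{\Lambda}_1 G_N\cdot B_N$ differ from $\Id$ by an expression of the form $(\text{bounded})\cdot [\Lambda_1, f_N(A)]\cdot (\text{bounded})$, which is compact by (ii). So everything reduces to (i) and (ii).

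For (ii), the crucial algebraic input is the normalization $f_N(0) = f_N(1) = 1$ (the first is obvious; the second follows from $p_N(1) - (p_N(1)-1)\cdot 1 = 1$). Thus $f_N(\alpha) - 1$ vanishes at $\alpha\in\{0,1\}$ and factors as $f_N(\alpha) - 1 = (\alpha^2 - \alpha)\,r_N(\alpha)$ for an explicit polynomial $r_N$ of degree $N-2$. Substituting $A$,
\begin{equation*}
  f_N(A) - \Id \ = \ (A^2 - A)\,r_N(A)\,.
\end{equation*}
Since $A\in\WLOC$ and $\WLOC$ is a $*$-algebra, $r_N(A)\in\WLOC$; since $A^2 - A\in \WLOCC{2}$ by the definition of $\qProj{2}$ and $\WLOCC{2}$ is a two-sided ideal in $\WLOC$, we conclude $f_N(A)-\Id\in\WLOCC{2}$. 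Then \cref{eq:derivative of weakly local operator is confined} gives $-\ii[\Lambda_1, f_N(A)] = \partial_1(f_N(A) - \Id) \in \WLOCC{1}$; and a direct inspection of matrix elements shows that the same commutator inherits the direction-$2$ confinement of $f_N(A)-\Id$. Hence it lies in $\WLOCC{1}\cap\WLOCC{2} \subseteq \mathcal{J}_1$ by \cref{eq:confined operators are trace class}, and is in particular compact.

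For (i), the spectrum $\sigma(A)$ is a compact subset of $\RR$, so $f_N\to p_\infty$ uniformly on $\sigma(A)$. By the bounded Borel functional calculus for the self-adjoint operator $A$, this upgrades to $f_N(A)\to \exp(-2\pi\ii A)$ in operator norm. The limit is unitary, hence invertible; since the invertibles are open in $\BLO{\HH}$, one may take $N_A$ so large that $f_N(A)$ is invertible with bounded inverse $G_N$ for every $N\ge N_A$.

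The main obstacle, and indeed the whole reason for introducing the $f_N$ in place of $\exp(-2\pi\ii A)$ directly, is precisely why one cannot close the argument by holomorphic functional calculus: the factorization $f_N(A) - \Id = (A^2-A)\,r_N(A)$ relies on $r_N(A)\in\WLOC$, which is immediate from the $*$-algebra structure only because $r_N$ is a polynomial. For a general entire $h$, norm-convergence of the series defining $h(A)$ is not known to produce an element of $\WLOC$, since a Combes--Thomas-type propagation of off-diagonal decay through resolvents is unavailable in the weakly-local setting (as highlighted in the "Open Questions" earlier). The polynomial truncation $f_N$ sidesteps this gap entirely, at the modest cost of having to argue invertibility of $f_N(A)$ by norm-approximation rather than reading it off from the unitarity of $\exp(-2\pi\ii A)$.
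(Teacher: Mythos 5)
Your proposal is correct and follows essentially the same approach as the paper: establish invertibility of $f_N(A)$ for $N$ large via norm-approximation of the unitary $\exp(-2\pi\ii A)$, factor $f_N(A)-\Id$ through $A^2-A$ using $f_N(0)=f_N(1)=1$ to land in $\WLOCC{2}$, and then use the resulting trace-class commutator to exhibit $\bbLambda_1 f_N(A)^{-1}$ as a parametrix. The only minor difference is presentational: you package the factorization as $f_N(\alpha)-1=(\alpha^2-\alpha)r_N(\alpha)$, whereas the paper writes out the telescoping sum $\sum_{n=2}^N\varphi_n\sum_{k=0}^{n-2}A^k(A^2-A)$ explicitly, and you are a bit more careful in noting that the uniform convergence $f_N\to p_\infty$ used for invertibility is really uniform on $\sigma(A)$.
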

	\begin{proof}
		Since $U := p_\infty(A)$ is unitary, it is invertible. Since the set of invertible operators is open, there is  $\ve>0$ such that $$ B_\ve(U) \subseteq \mathrm{invertibles}\,.$$ In fact, $\ve$ may be estimated using the Neumann series: since $U$ is unitary, one may taken any $\ve<1$. Hence there is some $N$ large enough (indepedent of $A$ due to uniform convergence) so that $\norm{U-f_N(A)}<1$ and hence $f_N(A)$ is invertible. Let $C_N := \mathbb{\Lambda}_1 ((f_N(A))^{-1})$. A short calculation yields $$ \Id-B_N C_N \ = \  -\Lambda_1[f_N(A)-\Id,\Lambda_1]((f_N(A))^{-1})\Lambda_1\,.$$ Now, $f_N(A)-\Id = \sum_{n=1}^{N}\varphi_n A^n$ may be expressed, using the fact that $\sum_{n=1}^N \varphi_n = 0$, as $$f_N(A)-\Id \ = \ \sum_{n=2}^N \varphi_n\sum_{k=0}^{n-2}A^k(A^2-A) \ ,$$ which is evidently the product of a polynomial (of finite degree) of $A$ times $A^2-A\in\WLOCC{2}$.  Thus, the whole expression is in $\WLOCC{2}$. After taking the commutator with $\Lambda_1$ we find an expression which is trace-class, whence $C_N$ is found to be the parametrix of $B_N$ and hence our claim.
	\end{proof}

	Finally we come to our main \cref{thm:Fredholm homotopies in the mobility gap regime}.
	In the proof we will use the following Lemma of Dieudonne:
	\begin{lem}[Dieudonne]
		If $F$ is a Fredholm operator and $G$ is \emph{any} parametrix of $F$, then the open ball $B_{\norm{G}^{-1}}(F)$ lies within the set of Fredholm operators.
	\end{lem}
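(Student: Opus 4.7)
The plan is to invoke Atkinson's theorem, which characterizes Fredholm operators as precisely those admitting both a left and a right parametrix. The strategy is to perturb the given $G$ itself to build parametrices for any $F'$ inside the open ball $B_{\norm{G}^{-1}}(F)$.

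By definition of a parametrix we may write $FG = \Id + K_1$ and $GF = \Id + K_2$ with $K_1, K_2$ compact. Fix $F' \in B_{\norm{G}^{-1}}(F)$ and set $\Delta := F' - F$, so that $\norm{\Delta G} \leq \norm{\Delta}\, \norm{G} < 1$ and similarly $\norm{G \Delta} < 1$. By the Neumann series, both $U_1 := \Id + \Delta G$ and $U_2 := \Id + G \Delta$ are invertible in $\BLO{\HH}$.

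Next, I compute
$$ F' G \ = \ F G + \Delta G \ = \ \Id + K_1 + \Delta G \ = \ U_1 + K_1, $$
and multiplying on the right by $U_1^{-1}$ yields
$$ F' (G U_1^{-1}) \ = \ \Id + K_1 U_1^{-1}, $$
with $K_1 U_1^{-1}$ compact by the ideal property of the compacts. Hence $G U_1^{-1}$ is a right parametrix for $F'$. An entirely analogous computation starting from $G F' = U_2 + K_2$ gives $(U_2^{-1} G) F' = \Id + U_2^{-1} K_2$, so $U_2^{-1} G$ is a left parametrix for $F'$. Atkinson's theorem then delivers $F' \in \calF$.

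The only subtlety worth flagging is that the strict inequality $\norm{F - F'} < \norm{G}^{-1}$ is essential in order for the Neumann series to converge and for $U_1, U_2$ to be genuinely invertible; this is guaranteed precisely by $F' \in B_{\norm{G}^{-1}}(F)$. Otherwise the argument is pure algebraic perturbation theory at the level of the Calkin algebra, and no deeper obstacle arises.
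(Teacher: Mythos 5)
The paper states Dieudonn\'e's lemma without proof, citing it as a known result, so there is no proof in the paper against which to compare. Your argument is correct and is in fact the standard textbook proof: use Atkinson's theorem to reduce Fredholmness to invertibility modulo compacts, write $F'G = U_1 + K_1$ with $U_1 = \Id + \Delta G$ invertible by Neumann series (since $\norm{\Delta G}\le \norm{\Delta}\norm{G}<1$), peel off $U_1^{-1}$ to get a right parametrix $GU_1^{-1}$, do the mirror computation on the left with $U_2 = \Id + G\Delta$, and conclude by Atkinson. The algebra checks out and the only hypothesis used is exactly the strict inequality $\norm{F'-F}<\norm{G}^{-1}$, which is what the open ball provides. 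One minor point worth making explicit if you want to be pedantic: having a left parametrix and a right parametrix (which need not coincide) is equivalent to having a two-sided one, by the usual Calkin-algebra argument that a one-sided inverse on each side forces them to agree; this is implicit in your appeal to Atkinson and is fine to leave as is.
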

	\begin{proof}[Proof of \cref{thm:Fredholm homotopies in the mobility gap regime}]
		By \cref{lem:polynomial fredholm} and its proof, there is $N_A$ such that if $N\geq N_A$ then $B_N$ is Fredholm with parametrix $C_N$. Hence using the Lemma of Dieudonne, we see it suffices to show that $$ \norm{B_N-\WW_1 A }\ \leq \ \frac{1}{\norm{C_N}} \ . $$
		In fact we may estimate $\norm{C_N} \leq 2 + \norm{(f_N(A))^{-1}}$ and $\norm{(f_N(A))^{-1}} \leq 2$ if $\norm{f_N(A)-U}<\frac{1}{2}$. Hence we can apparently pick $N$ large enough (independent of $A$) such that $\norm{f_N(A)-U}<\frac{1}{4}$, which suffices.
	\end{proof}
	
	\section{Open questions}
	\subsection{Equivalence of edge indices in the mobility gap regime}\label{subsec:equivalence of IQHE edge indices}
	We have proved the bulk-edge correspondence in the mobility gap regime for both the $\ZZ$-index (IQHE) and $\ZZ_2$ index, using our regularized edge index \cref{eq:edge index}, which is new. For the IQHE in the mobility gap regime, there was already an alternative regularized edge formula for the edge conductance, defined \cite[Eq. (1.12)]{EGS_2005}. A priori, it is not clear that the two quantities are equal.  Of course, a posteriori, they are seen to be equal thanks to our bulk-edge correspondence \cref{thm:bec} and that of \cite{EGS_2005}. However, it is an interesting question whether one can prove this directly, without reference to the bulk index, along the lines of \cite{SBKR_2000}.
	
	\subsection{The nature of the edge spectrum in the mobility gap regime}
	For the IQHE, we conjecture that in the mobility gap regime there is absolutely continuous spectrum in $\Delta$ for $\HE$. Using the results of \cite[Theorem 2.1 3.]{Asch_Bourget_Joye_2002}, we know that having a non-trivial index implies that $$\sigma_{\mathrm{ac}}(\exp(-2\pi\ii \hQ g(\HE)) \hQ)) = \mathbb{S}^1 \ , $$ but it is not clear what this implies about the spectral nature of $\HE$ within $\Delta$ due to the presence of $\hQ$ on the left hand side. 
	
	In \cite{Fröhlich2000,Bievre_Pule_doi:10.1142/9789812777874_0003} it was proved that continuum non-trivial IQHE edge systems have ac-spectrum in the spectral gap regime using the Mourre esetimate, and in \cite{bols2021absolutely} the same was proven for discrete systems via \cite[Theorem 2.1 3.]{Asch_Bourget_Joye_2002} (for them $\hQ=\Id$ as they work in the spectral gap regime). Recently \cite{Bols_Cedzich_2022} proved the same thing for the spectral gap regime of $\ZZ_2$ time-reversal-invariant discrete systems.
	
	\subsection{Local $\ZZ_2$ trace formulas}
	We conjecture that in the spectral gap regime, \begin{align}\hat{\calN} = 2\pi \left(\lim_{T\to\infty}\tr(g'(\HE) (\ii[\Lambda_1^T, \HE])_{+})\right) \mod 2\end{align} where $A_+ \equiv \frac{1}{2}(A+|A|)
	$ denotes the positive part of an operator and $$ \Lambda_1^T \equiv \frac{1}{2T}\int_{-T}^T \exp(\ii t \HE)\Lambda_1\exp(-\ii t\HE)\dif{t}\,. $$ For this formula to make sense, we would need to show that \begin{align}2\pi \left(\lim_{T\to\infty}\tr(g'(\HE) (\ii[\Lambda_1^T, \HE])_{+})\right) \in \ZZ\,.\end{align} Once this has been established it is suggestive to regularize this formula in the mobility-gap regime similarly to \cite{EGS_2005}. However, this formula also includes a limiting process (cf. \cite[Thm. 2.3]{Fonseca2020}), so that it's unclear what value it might hold.

	\appendix

	\section*{Appendices}
	\section{A Schatten class lemma} 
	A fundamental tool for our arguments is the observation that if $P\in \WLOC{}$ and $f(X)$ is a multiplication operator in the position basis such that $|f(x)-f(y)|$ decays suitably as $x,y\rightarrow \infty$, then $[P,f(X)]$ is compact.  In fact we will show that this operator is \emph{Schatten-3}, i.e. $\norm{[P,f(X)]}_3^3 = \tr |[P,f(X)]|^3<\infty$, under natural conditions on $f$.  The proof of this fact follows similar arguments to those presented in \cite{Avron1994220,Aizenman_Graf_1998}.
	\begin{lem}\label{lem:flux_piercing}
		Let $P\in\WLOC$ be such that $\norm{P}\leq 1$ and $f\in \ell^\infty(\ZZ^2)$ be such that \begin{equation} |f(x)-f(y)| \ \leq \  D\frac{\norm{x-y}}{1+\norm{x}} \label{eq:admissible flux functions}\end{equation} with $D<\infty$. Then $ [P,f(X)]$
		is Schatten-$3$.
	\end{lem}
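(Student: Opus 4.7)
My plan is to prove $\|[P,f(X)]\|_3<\infty$ via a H\"older-type decomposition $[P,f(X)]=V(X)\,B\,V(X)$, where $V(X):=(1+\|X\|)^{-\alpha}$ is a Schatten-$6$ multiplication operator and $B$ is bounded; then the generalized H\"older inequality for Schatten ideals gives
\[
\|[P,f(X)]\|_3 \;\leq\; \|V(X)\|_6\,\|B\|_\infty\,\|V(X)\|_6 \;<\;\infty.
\]

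Before doing this I would sharpen both hypotheses into symmetric forms. Swapping $x$ and $y$ in the bound on $f$ shows that one may replace $(1+\|x\|)$ by $(1+\max(\|x\|,\|y\|))$ in the denominator. Since $\WLOC$ is a $*$-algebra the weak-locality bound may be applied to both $P$ and $P^{*}$; combining this with the triangle inequality $(1+\|x\|)\leq(1+\|y\|)(1+\|x-y\|)$ (to trade polynomial prefactors in one position for additional decay in $\|x-y\|$, at the cost of lowering $\mu$ by $\nu$) yields
\[
\|P_{xy}\|\leq C_\mu(1+\|x-y\|)^{-\mu}(1+\min(\|x\|,\|y\|))^{\nu}
\]
for every sufficiently large $\mu$. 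The matrix elements of the commutator therefore obey
\[
\|[P,f(X)]_{xy}\|\leq C(1+\|x-y\|)^{-\mu+1}(1+\min)^{\nu}(1+\max)^{-1},
\]
which, using the elementary estimates $(1+\max)\geq(1+\|x\|)^{1/2}(1+\|y\|)^{1/2}$ (from the factorization $(1+\|x\|)(1+\|y\|)=(1+\min)(1+\max)\leq(1+\max)^2$) and $(1+\min)^\nu\leq(1+\|x\|)^{\nu/2}(1+\|y\|)^{\nu/2}$, simplifies to
\[
\|[P,f(X)]_{xy}\|\leq C'(1+\|x-y\|)^{-\mu+1}(1+\|x\|)^{(\nu-1)/2}(1+\|y\|)^{(\nu-1)/2}.
\]

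Now I would pick $\alpha$ satisfying $1/3<\alpha\leq(1-\nu)/2$ --- an interval that is non-empty when $\nu<1/3$, which is the typical situation in the applications. In two dimensions $\|V(X)\|_6^{6}=\sum_{x\in\ZZ^2}(1+\|x\|)^{-6\alpha}$ converges precisely when $\alpha>1/3$, so $V(X)$ is Schatten-$6$. For $B:=V(X)^{-1}[P,f(X)]V(X)^{-1}$ the matrix element bound above becomes
\[
|B_{xy}|\leq C'(1+\|x-y\|)^{-\mu+1}(1+\|x\|)^{(\nu-1)/2+\alpha}(1+\|y\|)^{(\nu-1)/2+\alpha},
\]
and the condition $\alpha\leq(1-\nu)/2$ ensures that the prefactors are non-positive powers of $(1+\|x\|)$ and $(1+\|y\|)$; Schur's test (invoking the arbitrarily large $\mu$) then gives $\|B\|_\infty<\infty$, closing the proof.

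The main obstacle is handling the case $\nu\geq 1/3$, in which the Schatten-$6$ constraint $\alpha>1/3$ and the Schur-boundedness constraint $\alpha\leq (1-\nu)/2$ become incompatible. In such a scenario one would need either to exploit additional structure of $P$ (e.g.\ a projection property, which renders $[P,f(X)]$ off-diagonal with respect to $\HH=\operatorname{ran}P\oplus\operatorname{ran}P^{\perp}$ and sharpens the bounds via $A:=Pf(1-P)$), or to carry out a dyadic decomposition of $\ZZ^{2}$ into annuli and assemble the Schatten-$3$ bound from block-by-block Hilbert--Schmidt and trace-class estimates. For the $\WLOC$ operators arising in this paper from Anderson localized Hamiltonians the growth exponent $\nu$ is harmlessly small, so the direct approach above suffices.
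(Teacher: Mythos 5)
Your factorization $[P,f(X)]=V(X)\,B\,V(X)$ with $V(X)=(1+\norm{X})^{-\alpha}\in\mathcal{J}_6$ is a clean idea, but it fails for exactly the operators this lemma is meant to cover, and your closing paragraph hides rather than closes the gap. The lemma is stated for arbitrary $P\in\WLOC$, i.e.\ arbitrary growth exponent $\nu\in\NN$ in \cref{eq:weakly-local operator}. In the paper's applications $P$ is a spectral projection obtained from Anderson localization via fractional-moment or eigenfunction-correlator bounds followed by Markov/Borel--Cantelli arguments (cf.\ the reference to \cite[Prop.\ A.1]{Shapiro2019}), and that passage from ``bounded in expectation'' to ``bounded almost surely'' costs a polynomial factor whose exponent $\nu$ is at least of order the dimension $d=2$; it is certainly not below $1/3$. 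So the claim that ``the growth exponent $\nu$ is harmlessly small'' is unjustified, and the interval $(1/3,(1-\nu)/2]$ you need for $\alpha$ is empty. This is not an edge case you can defer: your argument proves a strictly weaker statement than the lemma.

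The missing ingredient is that for $x$ far from the origin you should not pay the $(1+\norm{x})^\nu$ price from the $\WLOC$ bound at all; there the trivial operator bound $\norm{P_{x,x+b}}\leq 1$, combined with the decay $|f(x)-f(x+b)|\lesssim \norm{b}/(1+\norm{x})$, already gives summability. Concretely, estimate $\norm{[P,f(X)]}_3\leq\sum_{b}\bigl(\sum_x\norm{P_{x+b,x}}^3\,|f(x)-f(x+b)|^3\bigr)^{1/3}$ and, for each shift $b$, split the inner sum over $x$ into a ball $B=\{x:1+\norm{x}\leq(1+\norm{b})^\alpha\}$ and its complement. On $B$ use the $\WLOC$ bound with $\mu$ chosen large enough to beat both $(1+\norm{x})^{3\nu}$ and the volume $|B|\lesssim(1+\norm{b})^{2\alpha}$; on $B^c$ use $\norm{P_{x+b,x}}\leq 1$ and the decay of $f$. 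Since both $\alpha$ and $\mu$ are at your disposal, this works for every $\nu$. Your dyadic-annuli remark is pointing in the right direction, but the point is to make that decomposition depend on the shift $b$, and that is what actually has to be carried out.
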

	\begin{proof}
		We have $[P,f(X)]_{xy} = P_{xy}(f(x)-f(y))$ and $$ \norm{[P,f(X)]}_3 \ \leq \  \sum_{b\in\ZZ^2}\left(\sum_{x\in\ZZ^2}\norm{P_{x+b,x}}^3|f(x)-f(x+b)|^3\right)^{1/3}\,. $$ 
		Let $B\subseteq\ZZ^2$ be a finite set, to be specified below. Applying the estimate \cref{eq:weakly-local operator} for $x\in B$ (since $P\in \WLOC$) and, for $x\in B^c$, noting that $\norm{P_{x,x+b}}\leq 1$, we conclude that there is $\nu\in\NN$ such that for any $\mu\in\NN$ there is $C_\mu<\infty$ with which  \begin{equation}\label{eq:flux insertion intermediate equation}
			\sum_{x\in\ZZ^2}\norm{P_{x+b,x}}^3|f(x)-f(x+b)|^3 \ \leq \  2\|f\|_\infty^3 \sum_{x\in B} C_\mu^3 \frac{(1+\norm{x})^{3\nu}}{ (1+\norm{b})^{3\mu}} + \sum_{x\in B^c} D^3 \frac{\norm{b}^3}{(1+\norm{x})^3}\,.
		\end{equation}
		
		Now pick $B$ to be the set of $x$ such that $1+\norm{x}\leq (1+\norm{b})^\alpha$, with $\alpha$ still to be determined.Then we find that the second term on the right hand side of \cref{eq:flux insertion intermediate equation} is bounded above by $$ D^3\norm{b}^3(1+\norm{b})^{-\alpha/2}\sum_{x}(1+\norm{x})^{-2.5} \ \leq \ \tilde{D}^3\norm{b}^3(1+\norm{b})^{-\alpha/2}\, ,  $$ while the first term on the right hand side is bounded above by $$ 8 C_\mu^3(1+\norm{b})^{-3\mu}\sum_{x\in B}(1+\norm{x})^{3\nu} \leq 8 C_\mu^3(1+\norm{b})^{-3\mu+3\nu\alpha}|B|\,. $$
		Since $B$ is a ball about the origin of radius $(1+\norm{b})^\alpha-1$, $B$ is bounded above by $\tilde{C} (1+\norm{b})^{2\alpha}$ for some universal $\tilde{C}<\infty$. Hence the first term on the right hand side of \cref{eq:flux insertion intermediate equation} is bounded above by $$ 8 C_\mu^3 \tilde{\tilde{C}}(1+\norm{b})^{-3\mu+3\nu\alpha+2\alpha}\,. $$
		
		To make the sum $\sum_{b\in\ZZ^2}$ finite we choose $\alpha/2-3>6$ and $3\mu-(3\nu+2)\alpha>6$ ($6$ since we are taking the $1/3$ power and we need at least power, say, $2$ to make this summable on $\ZZ^2$). Both of these may be arranged since $\alpha$ was arbitrary and $\mu$ may be taken arbitrarily large.
	\end{proof}
	
	\section{The SULE basis}\label{sec:The SULE basis}
	In this section let $\mathcal{H}=\ell^2(\ZZ^d)\otimes \CC^N$ and $V\subset \mathcal{H}$ a closed subspace.
	\begin{defn}[SULE basis]\label{def:SULE basis}
		A \emph{Semi-Uniformly Localized basis} for $V$ is an orthonormal basis $\Set{\psi_n}_n$ such that there are a sequence of ``localization centers'' $\Set{x_n}_n\subseteq\ZZ^d$ and $\nu\in\NN$ so that for any $\mu>0$ it holds that \begin{align} \norm{\psi_n(x)} \leq C_\mu(1+\norm{x-x_n})^{-\mu}(1+\norm{x_n})^{\nu}\qquad(x\in\ZZ^d)\, \label{eq:SULE cond}\end{align}
		with $C_\mu <\infty$.
	\end{defn}
	\begin{rem*} When a semi-uniformly localized basis $\Set{\psi_n}_n$ consists of eigenfunctions for a self-adjoint operator, it is called a  Semi-Uniformly Localized Eigenfunction (SULE) basis. This notion was originally defined in \cite{delRioJitomirskayaLastSimon1996}.
	\end{rem*}

	It is shown in \cite[Corollary 7.3]{delRioJitomirskayaLastSimon1996} that the localization centers $\Set{x_n}_n$ obey \begin{align}\sum_{n}\frac{1}{(1+\norm{x_n})^{d+\ve}}\ < \ \infty\qquad(\ve>0) \ , \label{eq:SULE summability}\end{align} a fact we shall use below.  The estimate \cref{eq:SULE cond} of course implies that the operator on $\ell^2(\ZZ^d)$ with matrix elements $\norm{\psi_n(x)}\norm{\psi_n(y)}$ is $\WLOC$.
	
	The following proof appeared in \cite[Section 3.6]{EGS_2005}. We include it here for completeness.
	\begin{lem}
		For an interval $\Delta\subseteq\RR$, if $H$ is a $\Delta$-insulator on $\mathcal{H}$ in the sense of \cref{def:deterministic insulator} then there exists a SULE basis for the vector subspace $\im(\chi_\Delta(H))$ consisting of eigenfunctions for $H$.
	\end{lem}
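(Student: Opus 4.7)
The argument naturally splits into two stages: first, showing that $\sigma(H)\cap\Delta$ is pure point so that $\im\chi_\Delta(H)$ admits an orthonormal eigenbasis; and second, producing suitable localization centers and deriving the polynomial decay in \cref{eq:SULE cond}.

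For the first stage, I would apply the uniform $\WLOC$ estimate to the family $f_t\in B_1(\Delta)$ defined by $f_t(\lambda)=e^{-it\lambda}\chi_\Delta(\lambda)$ (extended to be constant on the components of $\Delta^c$). This yields off-diagonal decay of the matrix elements of $e^{-itH}\chi_\Delta(H)$ uniformly in $t$, which is a quantitative form of dynamical localization on the spectral subspace $\im\chi_\Delta(H)$. A RAGE-type argument then rules out any continuous spectral component of $H$ inside $\Delta$, since vectors in the continuous subspace would have their mass escape every compact region on time average, contradicting the $t$-uniform localization of $e^{-itH}\phi$ for $\phi\in\im\chi_\Delta(H)$. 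Combined with the finite-multiplicity hypothesis from \cref{def:deterministic insulator}, this yields a countable orthonormal basis $\{\psi_n\}_n$ of $\im\chi_\Delta(H)$ with $H\psi_n=\lambda_n\psi_n$ and $\lambda_n\in\Delta$.

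For the second stage, I define the localization center $x_n$ to be a point (with ties broken lexicographically) at which the weighted norm $(1+\norm{x})^{-(d+1)/2}\norm{\psi_n(x)}$ attains its supremum. The pigeonhole inequality
\begin{equation*}
1=\sum_x \norm{\psi_n(x)}^2 \leq \Bigl(\sup_x \norm{\psi_n(x)}^2(1+\norm{x})^{d+1}\Bigr)\sum_x (1+\norm{x})^{-(d+1)}
\end{equation*}
then yields the seed estimate $\norm{\psi_n(x_n)}\geq c_0(1+\norm{x_n})^{-(d+1)/2}$ for a universal $c_0>0$. To get decay away from $x_n$, I would use the identity $\psi_n=\chi_\Delta(H)\psi_n$ in the form $\psi_n(x)=\sum_y(\chi_\Delta(H))_{xy}\psi_n(y)$, combined with the uniform $\WLOC$ estimate $\norm{(\chi_\Delta(H))_{xy}}\leq C_\mu(1+\norm{x-y})^{-\mu}(1+\norm{x})^\nu$. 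The plan is a bootstrap: split the $y$-sum into contributions near and far from $x_n$, control the near piece by the definition of $x_n$ (which gives $\norm{\psi_n(y)}\leq \norm{\psi_n(x_n)}(1+\norm{y})^{(d+1)/2}(1+\norm{x_n})^{-(d+1)/2}$ for all $y$ via the maximizing property), and iterate the resulting improved bound through the convolution estimate
\begin{equation*}
\sum_y(1+\norm{x-y})^{-\mu}(1+\norm{y-x_n})^{-M}\ \lesssim\ (1+\norm{x-x_n})^{-\min(\mu,M)+d},
\end{equation*}
together with the triangle splitting $(1+\norm{x})^\nu\leq(1+\norm{x_n})^\nu(1+\norm{x-x_n})^\nu$. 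Each iteration improves the decay rate in $\norm{x-x_n}$ at the price of an extra polynomial factor in $(1+\norm{x_n})$.

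The principal obstacle is organizing the bootstrap so that the eventual exponent $\nu'$ in the $(1+\norm{x_n})^{\nu'}$ prefactor stays bounded independently of both $n$ and the decay rate $\mu$ one is targeting, while the $\norm{x-x_n}$-decay exponent can be made arbitrarily large. A clean way to formalize this, following \cite{EGS_2005}, is to introduce the auxiliary quantity $M_n(\mu):=\sup_x\norm{\psi_n(x)}(1+\norm{x-x_n})^{\mu}(1+\norm{x_n})^{-\nu'}$ and prove by finite induction on $\mu$ that $M_n(\mu)\leq C(\mu)$ uniformly in $n$. The summability condition \cref{eq:SULE summability} for the centers $\{x_n\}$ then comes for free from \cite[Corollary 7.3]{delRioJitomirskayaLastSimon1996} once SULE is established.
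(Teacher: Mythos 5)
Your proposal diverges substantially from the paper's argument, and the bootstrap at its core has a gap that I don't think can be repaired in the form you sketch.

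The paper's proof is much more direct and avoids the bootstrap entirely. The key observation is that $\chi_{\{\lambda\}}\in B_1(\Delta)$ for every $\lambda\in\Delta$, so \cref{def:deterministic insulator} gives a $\WLOC$ bound on $P_\lambda:=\chi_{\{\lambda\}}(H)$, \emph{uniform in $\lambda$}. One then picks the localization center $x_0$ where the diagonal $a_x:=\norm{(P_\lambda)_{xx}}$ is maximal (which exists because $\tr P_\lambda<\infty$), takes the normalized column $\psi(x):=(P_\lambda)_{x,x_0}\vec v_0/\sqrt{a_{x_0}}$ as the eigenfunction, and gets the SULE estimate in one stroke by combining the Cauchy--Schwarz bound $\norm{(P_\lambda)_{x,x_0}\vec v_0}\leq\sqrt{a_x a_{x_0}}\leq a_{x_0}$ with the $\WLOC$ decay of $(P_\lambda)_{x,x_0}$ via a geometric mean; then one peels off $\psi\otimes\psi^\ast$ and inducts on $\rank P_\lambda$. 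Your proposal never uses $\chi_{\{\lambda\}}\in B_1(\Delta)$ and instead tries to extract decay from the fixed operator $\chi_\Delta(H)$, which mixes all eigenvalues in $\Delta$ and therefore cannot localize an individual $\psi_n$ about its own center.

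Concretely, your bootstrap does not start. The seed estimate you derive from the weighted maximum has the form $\norm{\psi_n(y)}\lesssim\norm{\psi_n(x_n)}(1+\norm{x_n})^{(d+1)/2}(1+\norm{y})^{-(d+1)/2}$: the decay is in $\norm{y}$ (distance to the origin), not in $\norm{y-x_n}$. Feeding this into $\psi_n(x)=\sum_y(\chi_\Delta(H))_{xy}\psi_n(y)$ together with the $\WLOC$ bound and your convolution estimate only reproduces decay in $\norm{x}$, never in $\norm{x-x_n}$, so the iteration cannot produce \cref{eq:SULE cond}. (There is also a sign slip in the exponent: to make $1=\sum_x\norm{\psi_n(x)}^2\leq(\sup_x\norm{\psi_n(x)}^2(1+\norm{x})^{d+1})\sum_x(1+\norm{x})^{-(d+1)}$ yield the stated seed inequality at the maximizing point, the center must maximize $(1+\norm{x})^{+(d+1)/2}\norm{\psi_n(x)}$, not $(1+\norm{x})^{-(d+1)/2}\norm{\psi_n(x)}$.) Your first stage (pure point spectrum via a RAGE-type argument from dynamical localization) is sound and in fact is left implicit in the paper, so that part is a reasonable supplement; but the second stage needs to be replaced by the paper's column-of-$P_\lambda$ construction, whose whole point is that it manufactures an eigenfunction already carrying the $\WLOC$ decay, rather than hoping to bootstrap one.
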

	\begin{proof}
		We let $P:=\chi_\Delta(H)$ and $P_\lambda := \chi_{\Set{\lambda}}(H)$ for each eigenvalue $\lambda\in \Delta$ of $H$.
		Since $\chi_{\Set{\lambda}}$ is a bounded Borel function on $\Delta$, we conclude from \cref{def:deterministic insulator} that there is $\nu $ such that for any $\mu$ we have \begin{align}\label{eq:SULE est 2} \norm{(P_{\lambda})_{x,x_0}} \ \leq \ C_\mu(1+\norm{x-x_0})^{-\mu}(1+\norm{x_0})^{\nu}\,  \end{align}
		for every eigenvalue $\lambda\in \Delta$, with $C_\mu<\infty$.
		
		Since all eigenfunctions are of finite multiplicity (see \cref{def:deterministic insulator}), we have $\tr(P_\lambda)<\infty$, so $a_x=\norm{(P_\lambda)_{xx}} \le \tr(P_\lambda)_{xx}$ is a summable sequence.  Let $x_0\in \ZZ^d$ be a point at which $a_x$ attains its maximum value and  let $\vec{v}_0\in \CC^N$ with $\norm{\vec{v}_0}=1$ and $\vec{v}_0^\dagger (P_\lambda)_{xx}\vec{v}_0=a_{x_0}$. Now define $$ \psi(x) \ :=\ \frac{1}{\sqrt{a_{x_0}}} (P_{\lambda})_{x,x_0}\vec{v}\,. $$
		One verifies that $\psi$ is an eigenvector for $H$ with eigenvalue $\lambda$, and it is normalized so that $\norm{\psi}^2=1$. 
		We have the bound \begin{multline} \norm{(P_{\lambda})_{x,x_0}\vec{v_0}} \ = \ \max_{\norm{\vec{v}}=1}  |\ip{\delta_x\otimes \vec{v}}{P_\lambda\delta_{x_0}\otimes \vec{v}_0}| \ = \ \max_{\norm{\vec{v}}=1} |\ip{P_\lambda \delta_x\otimes \vec{v}}{P_\lambda\delta_{x_0}\otimes \vec{v}_0}| \\ \leq\  \sqrt{a_x}\sqrt{a_{x_0}} \ \leq \ a_{x_0} \label{eq:SULE est 1}\end{multline} where in the penultimate step the Cauchy-Schwarz inequality was used and in the last step the fact that $a$ achieves its maximum at $x_0$. Combining \cref{eq:SULE est 1,eq:SULE est 2} and noting that $a_{x_0}\le 1$, we find that $$ \norm{\psi(x)} \ \leq \ \sqrt{C_\mu} (1+\norm{x-x_0})^{-\mu/2}(1+\norm{x_0})^{\nu/2}\,. $$
		
		Applying the same process again now to $P_\lambda-\psi\otimes\psi^\ast$, whose rank is smaller by $1$ compared to $P_\lambda$, we obtain the result by induction.
	\end{proof}

    One further consequence of our definition of an insulator (\cref{def:deterministic insulator}) is that matrix elements of the resolvent decay, as expressed in the following. 
	\begin{lem}\label{lem:SULE implies Green function decay}
		If $H$ is a $\Delta$-insulator as in \cref{def:deterministic insulator} then there is a (fixed, $H$-dependent) subset $S\subseteq\Delta$ of full Lebesgue measure such that, for any $E\in S$ one has \begin{align}\label{eq:decay of G from SULE} \left(H-\left(E+\ii\ve\right)\Id\right)^{-1} \in\WLOC  \end{align} with $\WLOC$ estimates uniform in $\ve\in[-1,1]\setminus\Set{0}$.
	\end{lem}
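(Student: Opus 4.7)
The plan is to split the resolvent through the spectral projection $P:=\chi_\Delta(H)$, which commutes with $H$. With $z=E+\ii\ve$,
\[(H-z\Id)^{-1} \ = \ P(H-z\Id)^{-1} + P^\perp (H-z\Id)^{-1},\]
and it suffices to put both summands in $\WLOC$ uniformly in $\ve\in[-1,1]\setminus\Set{0}$. Throwing away the two boundary points of $\Delta$ costs nothing, so I will assume $\dist(E,\Delta^c)>0$.

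The $P^\perp$ piece is handled by smooth functional calculus. Since $\dist(E,\overline{\Delta^c})>0$ uniformly in $\ve$, I can choose a smooth function $\tilde f_\ve:\RR\to\CC$ that coincides with $\lambda\mapsto(\lambda-z)^{-1}$ on a neighbourhood of $\overline{\Delta^c}$, vanishes near $\lambda=E$, and whose $C^k$-norms are uniformly bounded in $\ve\in[-1,1]\setminus\Set{0}$. Then $P^\perp(H-z\Id)^{-1}=\tilde f_\ve(H)\,\chi_{\Delta^c}(H)$. Since $H\in\LOC{}$, the smooth functional calculus (cf.\ \cite[Lemma A.3]{Elbau_Graf_2002}) gives $\tilde f_\ve(H)\in\LOC{}$ uniformly in $\ve$, while the insulator hypothesis applied to $\chi_{\Delta^c}\in B_1(\Delta)$ places $P^\perp=\chi_{\Delta^c}(H)$ in $\WLOC$. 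Their product thus lies in $\WLOC$ uniformly in $\ve$, since $\WLOC$ is a $*$-algebra containing $\LOC{}$.

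For the $P$ piece I invoke the preceding SULE basis lemma: $\im P$ carries an orthonormal basis $\Set{\psi_n}$ of eigenfunctions of $H$ with eigenvalues $\lambda_n\in\Delta$ and localization centers $x_n\in\ZZ^d$, so that $P(H-z\Id)^{-1}=\sum_n(\lambda_n-z)^{-1}\psi_n\psi_n^*$. To tame the denominators, I construct the exceptional null set by a Borel--Cantelli-type argument: fix any $\alpha>d$, and for $K\in\NN$ set
\[B_K:=\bigcup_n\Bigl(\lambda_n-\tfrac{1}{K}(1+\norm{x_n})^{-\alpha},\,\lambda_n+\tfrac{1}{K}(1+\norm{x_n})^{-\alpha}\Bigr).\]
The SULE summability \cref{eq:SULE summability} forces $|B_K|\le C/K$, so $\bigcap_K B_K$ is null, and for every $E\in S:=\Delta\setminus\bigcap_K B_K$ there exists $K(E)<\infty$ with $|E+\ii\ve-\lambda_n|\ge|E-\lambda_n|\ge K(E)^{-1}(1+\norm{x_n})^{-\alpha}$ for all $n$, uniformly in $\ve$. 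Plugging this lower bound into the spectral series alongside the SULE pointwise estimate $\norm{\psi_n(x)}\le C_\mu(1+\norm{x-x_n})^{-\mu}(1+\norm{x_n})^{\nu}$ (with $\mu$ at our disposal) and using the elementary inequalities $1+\norm{x_n}\le(1+\norm{x})(1+\norm{x-x_n})$ and $\max(\norm{x-x_n},\norm{y-x_n})\ge\norm{x-y}/2$ then produces a weakly-local estimate of the form $C_{\mu'}(1+\norm{x-y})^{-\mu'}(1+\norm{x})^{2\nu+\alpha}$, uniformly in $\ve$.

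The main obstacle is the exponent book-keeping in this last step: the growth factor $(1+\norm{x_n})^{2\nu+\alpha}$ must be split so that part is absorbed into $(1+\norm{x})^{2\nu+\alpha}$ on the diagonal, part is cancelled against SULE decay, while still leaving enough decay both to extract an arbitrarily large off-diagonal factor $(1+\norm{x-y})^{-\mu'}$ \emph{and} to ensure uniform summability of the residual $\sum_n(1+\norm{x-x_n})^{-a}(1+\norm{y-x_n})^{-b}$ (requiring $a,b>d$). Since the SULE exponent $\mu$ is at our disposal, each constraint is individually satisfiable; the remaining technical work is to find a single allocation of exponents that meets all three simultaneously.
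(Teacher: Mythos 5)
Your argument is correct and takes a genuinely different route from the paper's. Where you split through the sharp projection $P=\chi_\Delta(H)$, expand $P(H-z)^{-1}$ over the SULE eigenbasis, and remove the null set $\bigcap_K B_K$ to secure a Diophantine-type lower bound $|E-\lambda_n|\ge K(E)^{-1}(1+\norm{x_n})^{-\alpha}$ weighted by the localization centers, the paper instead cuts with a smooth $\phi$, views each matrix element of $\phi(H)(H-z)^{-1}$ as the Borel transform of a purely singular finite measure, controls its boundary values via Loomis's weak-$L^1$ bound for the Hilbert transform, extends off the real axis using subharmonicity of $|f|^s$ and the Hardy--Littlewood maximal inequality, and only then applies Borel--Cantelli. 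Your route is more elementary and physically transparent (the exceptional energies are simply those that come abnormally close to an eigenvalue, weighted by how far from the origin that eigenstate lives) and leans directly on the SULE structure; the paper's harmonic-analytic route never invokes the eigenbasis and would be more robust if the SULE property were weakened. The exponent bookkeeping you flag does close without tension: absorbing $(1+\norm{x_n})^{\alpha+2\nu}$ into the diagonal via $(1+\norm{x_n})\le(1+\norm{x})(1+\norm{x-x_n})$ costs $\alpha+2\nu$ of off-diagonal decay, extracting $(1+\norm{x-y})^{-\mu'}$ (via $\max(\norm{x-x_n},\norm{y-x_n})\ge\norm{x-y}/2$) costs $\mu'$ more, and the residual sum over $n$ of $(1+\norm{x-x_n})^{-(d+\ve)}$ is bounded by $(1+\norm{x})^{d+\ve}\sum_n(1+\norm{x_n})^{-(d+\ve)}<\infty$ by \cref{eq:SULE summability}; since the SULE exponent $\mu$ is unbounded, all three constraints are met by $\mu\ge\mu'+\alpha+2\nu+d+\ve$, yielding a fixed growth exponent $\alpha+2\nu+d+\ve$ (slightly larger than the $2\nu+\alpha$ you quoted, but still finite and independent of $\mu'$, as $\WLOC$ requires).
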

	\begin{rem}
	For random operators, the decay manifested in \cref{eq:decay of G from SULE} is an almost-sure consequence of the various methods used to prove localization, and so, in principle could have been included in the definition of a deterministic insulator. We preferred to keep that definition however as is and provide the deterministic proof below.
	\end{rem}
	\begin{proof}
	Let $R(z)=(H-z)^{-1}$ for $\Im{z} >0$.  It suffices to show, for any closed interval $\Delta'\subset \mathrm{Int}\Delta$ (the interior of $\Delta$),  that there is a full measure set $S'\subset \Delta'$ such that \cref{eq:decay of G from SULE} holds for $E\in S'$.  Fix $\Delta'$ and let $\phi:\RR\rightarrow [0,1]$ be a smooth function with $\phi(x)=1$ on $\Delta'$ and $\phi(x)=0$ on $\Delta^c$.  Then 
	$$ R(z) \ = \ \phi(H) (H-z)^{-1} \ + \ (1-\phi(H)) (H-z)^{-1} \ =: \ R_\phi(z) + R_{1-\phi}(z) \ . $$
	We will bound $R_\phi\in \WLOC$ and $R_{1-\phi}\in \WLOC$ separately. The result for $R_{1-\phi}$ is immediate, since $\sigma(H)\ni\lambda\mapsto(1-\phi(\lambda))(\lambda-E-\ii\ve)^{-1}$ is a smooth function with derivatives bounded uniformly in $\ve\in[0,1]$, so for $E\in \mathrm{Int}\Delta'$ we have $R_{1-\phi}(E+\ii\ve) \in \LOC{}\subseteq\WLOC$, uniformly in $\ve\in [0,1]$. To bound $R_\phi$ we will use the fact that $\Delta$ is a mobility gap. We have
	$$ R_\phi(z) \ = \ \sum_{\lambda \in \mathcal{E}(\Delta)} \phi(\lambda) (\lambda-z)^{-1} P_\lambda \ , $$
	where $\mathcal{E}(\Delta)$ denotes the (countable) set of eigenvalues of $H$ in $\Delta$. 
	
	For $j=1,\ldots,N$ let $\vec{e}_j$ denote the elements of the standard basis for $\CC^N$.  
	Fix $x,y\in\ZZ^2$ and $i,j\in \{1,\ldots,N\}$ and let
	$$f_{x,y}^{i,j}(z)\ := \ \ip{\vec{e}_i}{R_\phi(z)_{x,y}\vec{e_j}} \ = \ \sum_{\lambda \in \mathcal{E}(\Delta)} \phi(\lambda) (\lambda-z)^{-1} \ip{\vec{e}_i}{(P_\lambda)_{x,y}\vec{e}_j} \ = \ \int_{\Delta} (\lambda-z)^{-1} \phi(t) \dif{m_{x,y}^{i,j}}(t)$$
	with 
	$$\dif{m_{x,y}^{i,j}}(t) \  := \ \sum_{\lambda \in \mathcal{E}(\Delta)} \ip{\vec{e}_i} {(P_\lambda)_{x,y}\vec{e}_j}
	\, \delta(t-\lambda)\dif{\lambda} \ .$$
	Because $f_{x,y}^{i,j}$ is the Borel transform of the finite Borel measure $m_{x,y}^{i,j}$, it is well known that the limit $\lim_{\ve\to 0} f_{x,y}^{i,j}(E+\ii\ve)$ exists for almost every $E$ and satisfies
	\begin{equation}\label{eq:weakL1} \left |\{E \in \RR \ : \  |f_{x,y}^{i,j}(E+\ii 0)| > \alpha \} \right | \ \le \ \frac{C}{\alpha}\int_{\RR} \phi(t)\dif{|m_{x,y}^{i,j}|(t)} \ , 
	\end{equation}
	where $| \cdot |$ denotes Lebesgue measure, $\dif{|m_{x,y}^{i,j}|}(t) = \sum_{\lambda} |\ip{\vec{e}_i}{(P_\lambda)_{x,y}\vec{e}_j}| \, \delta(t-\lambda) \dif{\lambda}$ is the total variation measure for $\dif{m_{x,y}^{i,j}}$, and $C$ is a universal constant.  To see this, recall that 
	$\lim_{\ve \to 0} \int \Im{\frac{1}{t-E-\ii\ve}}\dif{m_{x,y}^{i,j}}(t)=0$ a.e., since $m_{x,y}^{i,j}$ is purely singular, while $\lim_{\ve \to 0} \int \Re{\frac{1}{t-E-\ii\ve}}\dif{m_{x,y}^{i,j}}(t)=0 = Hm_{x,y}^{i,j}(E)$, the Hilbert transform of $m_{x,y}^{i,j}$, a.e..  Thus \cref{eq:weakL1} follows from Loomis's weak $L^1$ bound on the Hilbert transform of a measure \cite{loomis1946}. 
	
	The integral  on the right hand side of \cref{eq:weakL1} may be bounded as follows
	\begin{multline} \label{eq:bound on total variation}
	    \int_{\RR} \phi\dif{|m_{x,y}^{i,j}|} \ = \ \sum_{\lambda\in \mathcal{E}(\Delta)} \phi(t) |\ip{\vec{e}_i}{(P_\lambda)_{x,y}\vec{e}_j}| \ \le \ \left | \sup_{|g|\le 1} \sum_{\lambda\in \mathcal{E}(\Delta)} g(\lambda) \ip{\vec{e}_i}{(P_\lambda)_{x,y}\vec{e}_j} \right | \\ = \ 
	\sup_{|g|\le 1} \ip{\delta_x\otimes \vec{e}_i}{P_\Delta(H) g(H) \delta_y \otimes \vec{e}_j} \ \le \ 
	\sup_{|g|\le 1} \norm{ (P_\Delta(H) g(H))_{x,y}} \ .
	\end{multline} 
	Using \cref{def:deterministic insulator} to bound the right hand side, we see that we have shown the following
	\begin{quote} \emph{There are $\nu \in \NN$ such that for every $\mu \in \NN$ and $\alpha>0$  we have}
	\begin{equation}\label{eq:weakL1poly}
	\left |\{E \in \RR \ : \  |f_{x,y}^{i,j}(E+\ii 0)| > \alpha \} \right | \ \le \ C_\mu (1+\norm{x})^{\nu} (1+ \norm{x-y})^{-\mu} \frac{1}{\alpha } \ .
	\end{equation}
	\end{quote}
	
	To prove \eqref{eq:decay of G from SULE} we need to extend \eqref{eq:weakL1poly} off the real axis. For the moment let $x,y,i,j$ be fixed and write $f\equiv f_{x,y}^{i,j}$, $m\equiv m_{x,y}^{i,j}$ to simplify notation. For this purpose, let $0<s<1$ and note that the function $|f(z)|^s$ is sub-harmonic in the upper half plane. 
	Let $\tilde{\Delta}= \{t : \dist(t,\Delta)\le 1\}$ and define
	$$g(E) = |f(E+\ii 0)|^s \chi_{\tilde{\Delta}}(E) \quad \text{and} \quad h(E) = |f(E+\ii 0)|^s (1-\chi_{\tilde{\Delta}}(E))  \ . $$
	By the subharmonicity of  $|f(z)|^s$, we have 
	$$|f(z)|^s \ \le  \ \mathcal{P}g(z) + \mathcal{P}h(z)  $$
	for all $z$ in the upper half plane, $\mathcal{P}g(z)=\frac{1}{\pi}\int g(t)\Im {\frac{1}{t-z}}\dif{t}$ denotes the Poisson integral.    Because the Poisson kernel is a radially decreasing function, 
	$$|f(E+\ii \ve)|^s \ \le \ Mg(E) + Mh(E) \ , $$
	with $Mg$, $Mh$ the Hardy-Littlewood maximal functions of $g,h$, respectively.
	Since $g$ is compactly supported, it follows from \cref{eq:weakL1} that $g\in L^p(\RR)$ for, say, $p=\frac{2}{1+s}>1$, with
		\begin{multline*}\norm{g}_{L_p}^p \ = \ \frac{1}{p} \int_0^\infty t^{p-1} |\{ |g|>t\}| \dif{t} \ \le \ C \int_0^\infty t^{p-1} \min\left (|\tilde{\Delta}|, t^{-1/s}\int_{\RR}\phi \dif{|m|} \right ) \dif{t} \\ \le \ C |\tilde{\Delta}|^{1-sp}  \left ( \int_{\RR} \phi \dif{|m|} \right )^{sp} \ . \end{multline*}
	Because $|f(z)|^s\le \frac{1}{\dist(z,\Delta)^s} \left (\int \phi \dif{|m|} \right)^s$, we have $h\in L^q(\RR)$, for, say, $q=\frac{2}{s}>1$,
	with $$
	    \norm{h}_{L_q}^q \ \le  \ \int_{\tilde{\Delta}^c} \frac{1}{\dist(t,\Delta)^{qs}} \left (\int \phi \dif{|m|} \right)^{qs} \dif{t} \ \le \ C  \left (\int \phi \dif{|m|} \right)^{qs} \ . 
	$$
	Thus
	\begin{align*}\left | \{ E \ : \ \sup_{\ve\in (0,1]} |f(E+\ii \ve)|^s > \alpha\} \right | \ \le& \ \left | \{ E \ : \ Mg(E)\ge \tfrac{\alpha}{2} \} \right | \ + \ \left | \{ E \ : \ Mh(E) \ge \tfrac{\alpha}{2}\} \right | \\
	\le& C \left ( \frac{1}{\alpha^p} \left ( \int_{\RR} \phi \dif{|m|} \right )^{sp}  + \frac{1}{\alpha^q} \left (\int \phi \dif{|m|} \right)^{qs} \right ) \ ,
	\end{align*}
	by the Hardy-Littlewood maximal inequality. 	Using \cref{eq:bound on total variation} and \cref{def:deterministic insulator}, we find that we have shown: 
	\begin{quote} 
	\emph{There are $\nu\in \NN$, $s<1$, and $p,q>1$ such that for $\mu \in \NN$ and $\alpha>0$ we have}
	\begin{multline}\label{eq:xyijweak}
	    \left | \{ E \ : \ \sup_{\ve\in (0,1]} |f^{i,j}_{x,y}(E+\ii \ve)|^s > \alpha\} \right | \\
	    \le \ 
	     C_\mu \left ( \frac{1}{\alpha^p} (1+\norm{x})^{sp\nu} (1+ \norm{x-y})^{-sp\mu}  + \frac{1}{\alpha^q} 
	     (1+\norm{x})^{sq\nu} (1+ \norm{x-y})^{-sq\mu}\right ) \ , 
	\end{multline}
    \emph{for every $x,y\in \ZZ^2$ and $i,j\in \{1,\ldots,N\}$.}
	\end{quote}
	
	To prove \cref{eq:decay of G from SULE} we now apply a Borel-Cantelli argument. Fix  $\nu$, $s<1$ and $p,q>1$ as above and let $\nu'\in \NN$ be such that $\nu'=\nu + 3/\min(sp,sq)$.   Let $\mu'>0$ and apply \eqref{eq:xyijweak} with $\alpha=(1+\norm{x})^{\nu'} (1+ \norm{x-y})^{-\mu'}$ and $\mu> \mu' +3/\min(sp,sq)$ to conclude that
	\begin{multline*}
	    \sum_{x,y,i,j} \left | \left \{ E \ : \ \sup_{\ve\in (0,1]} |f^{i,j}_{x,y}(E+\ii \ve)|^s > (1+\norm{x})^{\nu'} (1+ \norm{x-y})^{-\mu'} \right \} \right | \\
	    \le \ C_{\mu}\sum_{x,y,i,j} (1+ \norm{x})^{-3} (1+\norm{x-y})^{-3} \ < \ \infty \ . 
	\end{multline*}
	We conclude from the Borel-Cantelli lemma that there is a full measure set of energies on which 
	$$ \sup_{\ve\in (0,1]} |f^{i,j}_{x,y}(E+\ii \ve)|^s \ \le \ (1+\norm{x})^{\nu'} (1+ \norm{x-y})^{-\mu'}$$
	for all but finitely many $x,y,i,j$.   Since for each $i,j,x,y$ we have also have $\sup_{\ve\in (0,1]} |f^{i,j}_{x,y}(E+\ii \ve)|^s < \infty$ on a full measure set of $E$, we conclude that there is a full measure set of $E$ on which 
    $$	\sup_{\ve\in (0,1]} |f^{i,j}_{x,y}(E+\ii \ve)|^s \ \le \ C_{\mu'} (1+\norm{x})^{\nu'} (1+ \norm{x-y})^{-\mu'}$$
	Repeating this for each $\mu'\in \NN$ (a countable set) we find that \cref{eq:decay of G from SULE} holds for $E$ in a set of full measure.
	\end{proof}
	
	\section{Proof of \cref{prop:index is zero in mobility gap}}\label{sec:proof of index zero in mob gap} In this section we prove that $\findex \mathbb{Q}U = 0\,$ for $Q$ a projection onto a subset of the mobility gap (this is \cref{prop:index is zero in mobility gap}). Since $Q$ projects onto localized states of $H$, we know it is spanned by a SULE basis $\Set{\psi_n}_n$  as in \cref{def:SULE basis}.
	
	Let us define an operator $V$ on $\im Q$, diagonal in the SULE basis,  
	via $$ V \psi_n := \exp(\ii\arg(x_n\cdot e_1+\ii x_n\cdot e_2)) \psi_n\,. $$ 
	We extend $V$ to $\HH$ by defining $V\psi=\psi$ for $\psi \in \im Q^\perp$.
	Clearly, $V$ is unitary and commutes with $Q$.  Thus $\findex \mathbb{Q} V = 0$, so it suffices to prove $ (U-V)Q=:B$ is compact. We shall show it is Schatten. For this, it suffices to show $$ \sum_y\left(\sum_x|B_{x,x+y}|^p\right)^{1/p}<\infty\,. $$ The proof is similar to that of \cref{lem:flux_piercing}, but here we have the added complication of having to control the infinite collection $\Set{\psi_n}_n$.
	
	Note that $$ B_{x,y} = \sum_{n=1}^\infty (\ee^{\ii\arg(x)}-\ee^{\ii\arg(x_n)}))\psi_n(x)\overline{\psi_n(y)}\,. $$ Now defining $f(x) := \exp(\ii\arg(x))$ we have 
	$$|B_{x,y}|^p  \ \leq \ \left(\sum_{n=1}^\infty |f(x)-f(x_n)||\psi_n(x)||\psi_n(y)|\right)^p \
	\leq \ \sum_{n}|f(x)-f(x_n)|^p|\psi_n(x)||\psi_n(y)| \ ,$$
	where we have used H\"older's inequality in the form $$ (\sum_j a_j b_j c_j)^p \ \leq\  (\sum_j a_j^p b_j c_j) \, (\sum_j b_j^2)^{\frac{p-1}{2}}\, (\sum_j c_j^2)^{\frac{p-1}{2}} $$ as well as the fact that $ \sum_n |\psi_n(x)|^2 \leq 1 $, i.e., $\sum_n \psi_n\otimes\psi_n^\ast = Q \leq \Id$.
	
	Now we observe that $f$ obeys the estimate in \cref{eq:admissible flux functions}, so that using \cref{eq:SULE cond} we find \begin{align*} |B_{x,x+y}|^p \ &\leq \ D^pC_\mu^2\sum_{n} \frac{\norm{x-x_n}^p}{(1+\norm{x_n})^{p/2}(1+\norm{x})^{p/2}}(1+\norm{x-x_n})^{-\mu}(1+\norm{x+y-x_n})^{-\mu+p}(1+\norm{x_n})^{2\nu}\\
		&\leq\  D^pC_\mu^2\sum_{n} (1+\norm{x-x_n})^{-\mu+p}(1+\norm{x+y-x_n})^{-\mu+p}(1+\norm{x_n})^{2\nu-p/2}(1+\norm{x})^{-p/2}\\
		&\leq\  D^pC_\mu^2(1+\norm{y})^{-\mu/2+p/2}(1+\norm{x})^{-p/2}\sum_{n} (1+\norm{x_n})^{2\nu-p}
	\end{align*}
	where in the last step we have used the triangle inequality in the form $ (1+\norm{a})(1+\norm{a+b}) \geq 1+\norm{b} $ as well as $(1+\norm{x+y-x_n})^{-\mu/2+p/2}\leq 1$, $(1+\norm{x-x_n})^{-\mu/2+p/2}\leq 1$.
	
	The result now follows thanks to \cref{eq:SULE summability} and the fact $p,\mu$ may be chosen arbitrarily large.

	\bigskip
	
	\noindent\textbf{Acknowledgements:}  
	We wish to thank Gian Michele Graf and Martin Zirnbauer for useful discussions. Work on this project was supported in parts by the following grants:  
	A. Bols was supported by the Villium Fonden through the QMATH Centre of Excellence, grant no. 10059.
	J. Schenker was supported by the U.S. National Science Foundation under Grant No. (1900015).
	J. Shapiro  was supported by the Swiss National Science Foundation (grant number P2EZP2\_184228), and the Princeton-Geneva Univ.  collaborative travel funds.
	
	\emph{Data availability and conflict of interest} This manuscript has no associated data available and no known conflicts of interest.
	\bigskip

	\begingroup
	\let\itshape\upshape
	\printbibliography
	\endgroup
\end{document}